\newtheorem{thm}{Theorem}[section]
\newcommand{\Rpq}{\mathbb{R}^{p,q}}
\newcommand{\Cpq}{Cl_{p,q}}
\newcommand{\R}{\mathbb{R}}
\newcommand{\C}{\mathbb{C}}
\newcommand{\om}{\omega}
\newcommand{\eps}{\epsilon}
\numberwithin{equation}{section}
\title{Bilinear Forms and Fierz Identities for Real Spin Representations}
\author{Eric O. Korman\footnote{\texttt{eric.korman@gmail.com}}, George Sparling \\ \\
Laboratory of Axiomatics \\
Department of Mathematics \\
University of Pittsburgh
}
\date{}
\begin{document}

\maketitle

\begin{abstract}
\noindent Given a real representation of the Clifford algebra corresponding to $\R^{p+q}$ with metric of signature $(p,q)$, we demonstrate the existence of two natural bilinear forms on the space of spinors.  With the Clifford action of $k$-forms on spinors, the bilinear forms allow us to relate two spinors with elements of the exterior algebra.  From manipulations of a rank four spinorial tensor introduced in $\cite{penroserindler}$, we are able to find a general class of identities which, upon specializing from four spinors to two spinors and one spinor in signatures (1,3) and (10,1), yield some well-known Fierz identities.  We will see, surprisingly, that the identities we construct are partly encoded in certain involutory real matrices that resemble the Krawtchouk matrices \cite{fein1}\cite{fein2}.
\end{abstract}

\section{Introduction}
The Fierz identities are relations among elements of the Clifford algebra, spinors, and the exterior algebra associated with the vector space $\R^n$ with metric $g$.  They have been used in Dirac's treatment of electron spin in signature (1,3) \cite{lounesto} as well as in M-theory (which uses signature (10,1)) \cite{mtheory}, \cite{11dsup}.  Like those in \cite{lounesto}, our Fierz identities are relations among spinors as opposed to spinor one-forms, which are the objects of interest in \cite{mtheory} and \cite{11dsup}.  However, our identities are more general than those in \cite{lounesto}: we have four-spinor identities instead of just one-spinor identities and our constructions work in arbitrary dimension and signature.  Our approach is an algebraic one, with the multiplicative group structure of the generators of the Clifford algebra playing an essential role in the derivations.  In the process of deriving the Fierz identities, we show the existence of a class of involutory real matrices, of dimension $n + 1 \times n + 1$ or $\frac{n+1}{2} \times \frac{n+1}{2}$.  We offer two proofs in the appendices that these matrices square to the identity.  One proof uses spinors and the other is a direct proof that uses contour integration.  \\

\noindent Although we restrict attention to real representations, we will see that a quaternionic structure arises when the Clifford algebra is isomorphic to a full matrix algebra over $\mathbb{H}$ (the quaternions).  When $\Cpq$ is isomorphic to a full matrix algebra over $\C$, we get a complex structure as well as an additional operator, which anti-commutes with $i$.  In some cases, this additional structure squares to the identity, in which case we can think of it as complex conjugation. \\

\noindent In the remainder of this section, we briefly summarize results from the general theory of Clifford algebras.  In the next section we define two natural bilinear forms on the space of spinors and then examine how we can use these to relate spinors to elements of the exterior algebra.  Finally, in the third section, we derive the various identities, considering the three cases where a Clifford algebra is isomorphic to a full matrix algebra over $\R, \C,$ or $\mathbb{H}$.  We also consider two cases of special interest to physics: the Clifford algebras associated with signature $(1,3)$ and $(10,1)$.  We show that our identities reduce to some familiar Fierz identities upon specialization from four spinors to one spinor.

\subsection{Clifford Algebras}
The \textit{Clifford algebra} of a real vector space $V$ with metric $g$ is the free algebra generated by $V$, modulo the relation
\begin{equation}
v^2 = g(v,v). \label{genrel}
\end{equation}
Replacing $v$ with $v + w$ in the above and expanding yields the relation
\begin{equation}
vw + wv = 2g(v,w). \label{vw}
\end{equation}
If $V=\R^{p+q}$ and $g$ has signature $(p,q)$ (i.e. $(\underbrace{+ + \ldots +}_{p-times} \underbrace{- - \ldots -}_{q-times}))$, then we denote the corresponding Clifford algebra by $\Cpq$.  We denote the image of the natural inclusion map $\R^{p+q} \hookrightarrow \Cpq$ by $\Rpq$. \\

\noindent If $\{e_i : 1 \le i \le p+q \}$ is the standard basis for $\R^{p+q}$ then from $(\ref{genrel})$ and $(\ref{vw})$ we have that
\begin{equation*}
e_i^2 = \begin{cases}
1 &\text{ if $1 \le i \le p$} \\
-1 &\text{ if $p + 1 \le i \le p + q$}
\end{cases}
\end{equation*}
and
\begin{equation*}
e_i e_j = -e_j e_i ~~ (i \ne j).
\end{equation*} \\

\noindent Clearly $\{ e_1^{i_1} e_2^{i_2} \ldots e_{p+q}^{i_{p+q}} : i_k = 0 \text{ or } 1 \}$ spans $\Cpq$, so that $\dim \Cpq \le 2^{p+q}$.  It can be shown that if $p - q \ne 1$ (mod 4) then any algebra generated by a set $\{ e_1 \ldots e_{p+q}\}$ satisfying the above relations must have dimension $2^{p+q}$.  If $p - q = 1$ (mod 4) then it is possible for the dimension to be $2^{p+q-1}$, with $e_1 e_2 \ldots e_{p+q} = \pm 1$ \cite{porteous}.  The element $e_1 e_2 \ldots e_{p+q}$ is canonical \cite{lawson}, is denoted by $\gamma$, and called the $\textit{pseudoscalar}$.  A straightforward computation shows that
\begin{equation}
\gamma^2 = (-1)^{\frac{(p+q)^2+q-p}{2}} \label{g2}
\end{equation}
and
\begin{equation}
\gamma u = \begin{cases}
u \gamma &\text{iff $p+q$ is odd or $u$ is even} \\
-u \gamma &\text{iff $p+q$ is even and $u$ is odd.}
\end{cases} \label{gcom}
\end{equation} \\

\noindent If $I = \{ i_1, i_2, \ldots i_n \}$, where each $i_j \in \mathbb{N}$ and $i_j < i_{j+1}$, then we use the notation $e_I$ for $\prod_{j=1}^n e_{i_j}$ and define $e_\emptyset = 1$.  We denote the grade involution by $\alpha$, where $\alpha(e_{i_1} e_{i_2} \ldots e_{i_n}) = (-1)^n e_{i_1} e_{i_2} \ldots e_{i_n}$.  We also make use of an algebra anti-involution $\tilde{ }$ called reversion, with $\widetilde{e_{i_1} \ldots e_{i_k}} = e_{i_k} \ldots e_{i_1}$. \\

\noindent We define the $Pin$ and $Spin$ subgroups of $\Cpq$ by
\begin{align*}
Pin(p,q) &= \{ v_1 v_2 \ldots v_n : v_i \in \Rpq, g(v_i,v_i)  = \pm 1 \} \\
Spin(p,q) &= \{ u \in Pin(p,q) : \alpha(u) = u \}.
\end{align*}
We give an action of $Pin(p,q)$ on $\Rpq$ by 
\begin{equation*}
u(v) = uv\alpha(u^{-1}), \text{ $u \in Pin(p,q)$, $v \in \Rpq$}.
\end{equation*} 
This action gives a 2-to-1 homomorphism from $Pin(p,q)$ to $O(p,q)$, the group of orthogonal transformations of signature $(p,q)$.  When restricted to $Spin(p,q)$ we get a 2-to-1 homomorphism to $SO(p,q)$, the special orthogonal group in signature $(p,q)$ \cite{porteous}.

\subsection{Representations of Clifford Algebras}
We can always represent $\Cpq$ as the set of all $n \times n$ matrices with entries in $\R, \C$, or $\mathbb{H}$.  We denote the set of all $n \times n$ matrices with entries in $\mathbb{F}$ by $\mathbb{F}[n]$.  The representation space is called the $\textit{space of spinors}$.  If $p-q \ne 1$ (mod 4), then the representation is unique.  Otherwise, there are two inequivalent representations; one has $\gamma = 1$ and the other has $\gamma = -1$, where $1$ is the identity matrix.  Furthermore we can always chose our representations such that $e_I^\dagger = e_I$ if and only if $e_I^2 = 1$ and $e_I^\dagger = -e_I$ if and only if $e_i^2 = -1$, where $\dagger$ is the conjugate transpose.  Therefore 
\begin{equation}
e_I^\dagger = e_I^{-1}. \label{tis-1}
\end{equation}

\noindent The full matrix algebra that $\Cpq$ is isomorphic to is determined by the quantity $\tau = q - p - 1$ mod 8:
\begin{equation}
\Cpq \simeq \begin{cases}
\R[2^{[p+q/2]}] &\text{ if $\tau = $5,6, or 7} \\
\mathbb{H}[2^{[p+q/2] - 1}] &\text{ if $\tau = $1,2, or 3} \\
\C[2^{p+q-1/2}] &\text{ if $\tau = $0, or 4}
\end{cases}, \label{reps}
\end{equation}
where $[r]$ denotes the integer part of $r$.  We say that $\Cpq$ is of type $\R, \mathbb{H}$, or $\C$ accordingly. \\

\subsubsection{Corner and Subordinate Algebras}
As mentioned above, if $p - q \equiv 1$ (mod 4) then there exist two inequivalent representations for $\Cpq$ on the same full matrix algebra.  From (\ref{reps}) we see that this happens if and only if $p+q$ is odd and $\Cpq$ is of type $\R$ or $\mathbb{H}$.  We call these Clifford algebras \textit{corner algebras} since we can find a real representation for any Clifford algebra from a representation of its closest corner algebra.  We therefore use the name \textit{subordinate algebra} for type $\R$ and $\mathbb{H}$ Clifford algebras that are not corner algebras. \\

\noindent Looking at (\ref{reps}) we see that type $\R$ algebras and type $\mathbb{H}$ algebras occur with the same frequency.  They also occur for both even and odd values of $p+q$ and both give rise to corner algebras.  Conversely, type $\C$ algebras occur less frequently and only in odd dimension.  Throughout this paper we will see that type $\R$ and $\mathbb{H}$ algebras can be treated similarly.  For example, in the next section we show that if we use a real representation for a type $\mathbb{H}$ algebra, quaternionic structure emerges in the form of three operators which satisfy the quaternionic relations and, when added to the representation of the Clifford algebra, generate the entire matrix algebra.  Furthermore, when we construct the cross-symmetry matrices we find that those for the type $\R$ and $\mathbb{H}$ algebras are identical except for a factor of two.   Conversely, real representations of type $\C$ algebras are not as neat, as an extra operator emerges that does not always have a clear interpretation (sometimes it can be thought of as conjugation).  For these reasons, we will only focus on Fierz identities for type $\R$ and $\mathbb{H}$ algebras.

\subsubsection{Real Representations of type $\mathbb{H}$ Algebras} \label{realreptypeH}
Suppose that $\Cpq \simeq \mathbb{H}^n$.  We can get a representation on $\R^{4n}$ by making the replacements
\begin{gather*}
i \to \left(\begin{array}{cccc}0 & -1 & 0 & 0 \\1 & 0 & 0 & 0 \\0 & 0 & 0 & -1 \\0 & 0 & 1 & 0\end{array}\right) \\
j \to \left(\begin{array}{cccc}0 & 0 & -1 & 0 \\0 & 0 & 0 & 1 \\1 & 0 & 0 & 0 \\0 & -1 & 0 & 0\end{array}\right) \\
k \to \left(\begin{array}{cccc}0 & 0 & 0 & -1 \\0 & 0 & -1 & 0 \\0 & 1 & 0 & 0 \\1 & 0 & 0 & 0\end{array}\right) \\
1 \to \left(\begin{array}{cccc}1 & 0 & 0 & 0 \\0 & 1 & 0 & 0 \\0 & 0 & 1 & 0 \\0 & 0 & 0 & 1\end{array}\right). 
\end{gather*}
Since $\dim_\R \mathbb{H}[n] = 4 n^2$ and $\dim_\R \R[4n] = 16 n^2$, the representation of $\Cpq$ on $\R^{4n}$ is not surjective (note however that the (real) dimension of the representation space is the same-- $4n$).  However, we will now show that there are always three matrices, which we suggestively call $I, J$, and $K$, we can add to get all of $\R[4n]$, i.e. $\Cpq \cup \{I, J, K\}$ generates $\R[4n]$.  These matrices satisfy the quaternionic relations so we can give an action of $\mathbb{H}$ on the space of spinors, $\mathcal{S}$, by
\begin{equation*}
(q_0 + q_1 i + q_2 j + q_3 k) \psi = (q_0 + q_1 I + q_2 J + q_3 K) \psi, ~~q_i \in \R, \psi \in \mathcal{S}.
\end{equation*}
It turns out that these matrices commute with $\Cpq$. \\

\noindent Since $\Cpq$ is of type $\mathbb{H}$ we have that $q - p - 1 \equiv$ 1, 2, or 3 (mod 8) (\ref{reps}).  Assume first that $q - p - 1 \equiv 1$.  Then $q - (p+2) - 1 \equiv 7$ so that $Cl_{p+2, q}$ is type $\R$.  We then get a real representation of $\Cpq$ by taking $\{ e_3, e_4, \ldots, e_{p+q+2} \}$ as generators, where $\{ e_1, e_2, \ldots, e_{p+q+2} \}$ are the standard generators for $Cl_{p+2,q}$.  Let $I = e_1 e_2, J = \gamma e_2, K = \gamma e_1$, where $\gamma = \prod_{i=1}^{p+q+2} e_i$  is the $Cl_{p+2,q}$ pseudoscalar.  Since $p+q+2$ is even, $\gamma$ anti-commutes with vectors so that $I, J$, and $K$ anti-commute with each other but commute with everything in $\Cpq$.  Furthermore, since $q - p - 1 \equiv 1$ (mod 8), $q - (p + 2) \equiv 0$ (mod 4) so that $\gamma^2 = 1$ by (\ref{g2}).  Thus $I^2 = J^2 = K^2 = -1$.  \\

\noindent The generators of $\Cpq$ along with the $I, J$, and $K$ operators now generate the entire real endomorphism algebra of $\mathcal{S}$ since
\begin{align*}
e_1 = \pm e_3 e_4 \ldots e_{p+q+2} J \\
e_2 = \pm e_3 e_4 \ldots e_{p+q+2} K
\end{align*}
and $Cl_{p+2,q} \simeq End_\R(S)$. \\

\noindent Consider now the case $q - p - 1 \equiv 3$ (mod 8).  Then $(q+2) - p - 1 \equiv 5$ (mod 8) so that $Cl_{p,q+2}$ is type $\R$.  As before, if $\{e_1, e_2, \ldots, e_{p+q+2} \}$ generates $Cl_{p,q+2}$, we  use $\{ e_1, \ldots, e_{p+q} \}$ to generate $\Cpq$ and give quaternionic structure with $I = e_{p+q+1} e_{p+q+2}, J = \gamma e_{p+q+2}, K = \gamma e_{p+q+1}$  (note that now $(q + 2) - p \equiv 6$ (mod 8) so that $\gamma^2 = -1$).  As in the previous case, $\{e_1, \ldots, e_{p+q} \} \cup \{ I, J, K \}$ generate $End_\R(\mathcal{S})$ and $I, J$, and $K$ are in the center of $\Cpq$. \\

\noindent Lastly, if $q - p - 1 \equiv 2$ (mod 8) then $q - p \equiv 3$ (mod 4) so that $\Cpq$ has two irreducible representations, where the psuedoscalar is $\pm 1$.  Thus we can represent it with $\{ e_1, e_2, \ldots, e_{p+q-1}, e_1 e_2 \ldots e_{p+q-1} \}$ where $\{ e_1, \ldots, e_{p+q-1} \}$ generate either $Cl_{p,q-1}$ or $Cl_{p-1,q}$ (both of which are type $\mathbb{H}$).  We can take $I$, $J$, and $K$ to be the same matrices as those in $Cl_{p,q-1}$ or $Cl_{p-1,q}$.  \\

\subsubsection{Real Representations of type $\C$ Algebras}
\noindent Suppose we have a surjective representation of $\Cpq$ on $\C^n$.  We can get a representation on $\R^{2n}$ by replacing $i$ with $\left(\begin{array}{cc}0 & -1 \\1 & 0\end{array}\right)$ and $1$ with $\left(\begin{array}{cc}1 & 0 \\0 & 1\end{array}\right)$.  By dimensionality considerations, $\Cpq$ is isomorphic to a proper subalgebra of $\R[2n]$ but the spin spaces have the same real dimension in each case. \\

\noindent If $\Cpq$ is of type $\C$ then $q - p - 1 \equiv 0$ or 4 (mod 8).  If $q - p - 1 \equiv 0$ then $q - (p+1) - 1 \equiv 7$ so that $Cl_{p+1,q}$ is of type $\R$.  If $q - p - 1 \equiv 4$ then $(q+1) - p - 1 \equiv 5$ so that $Cl_{p,q+1}$.  In either case, we get a real representation of $\Cpq$ by taking $\{e_1, e_2, \ldots, e_{p+q} \}$, where $\{ e_1, \ldots, e_{p+q+1} \}$ generate $Cl_{p+1,q}$ if $q - p - 1 \equiv 0$ and $Cl_{p, q-1}$ if $q - p - 1 \equiv 4$.  We will denote the extra generator, which anti-commutes with $\Rpq$, by $Z$.  We note that $\{e_1, \ldots, e_{p+q}, Z \}$ generates $End_\R(S)$. \\

\noindent Since $q - p \equiv 1$ (mod 4) and $(p+q)^2 \equiv 1$ (mod 4) ($p+q$ is odd), we have that $\gamma^2 = -1$ (\ref{g2}).  Furthermore, since $p+q$ is odd, $\gamma$ is in the center of $\Cpq$ (but anti-commutes with $Z$).  We give $\mathcal{S}$ a complex structure by defining
\begin{equation*}
(a+bi) \psi = (a + b \gamma) \psi, ~~ \psi \in \mathcal{S}.
\end{equation*}
If $q - p - 1 \equiv 0$ (mod 8) then $Z^2 = 1$ and, since $Z$ anti-commutes with $\Rpq$ and $p+q$ is odd, $Z$ anti-commutes with $i (= \gamma)$.  Therefore we can think of $Z$ as conjugation.  When $q - p -1 \equiv 4$ (mod 8) then $Z$ still anti-commutes with $i$ but we now have that $Z^2 = -1$.  It therefore seems tempting to give $\Cpq$ a quaternionic structure with $i, Z$, and $iZ$.  However, we hesitate to do this since $i$ and $Z$ are not interchangeable; $i$ is in the center of $\Cpq$ while $Z$ anti-commutes with $\Rpq$.

\subsubsection{Some Remarks}
\noindent If adding these additional operators for the type $\mathbb{H}$ and $\C$ algebras makes you uncomfortable, keep in mind that this is actually implicitly done in the standard Dirac treatment of $Cl_{1,3} \simeq \mathbb{H}[2]$, which uses a representation on $\C^4$.  In the Dirac theory, multiplying a spinor by $i$ is allowed but $i$ cannot be in the image of the representation of $Cl_{1,3}$ since the center of $Cl_{1,3}$ is $\{ \pm 1 \}$.  Indeed, the Dirac matrices and $i$ generate all of $\C[4]$.  It is interesting that there is a quaternionic structure if one uses a representation of $Cl_{1,3}$ on $\R^8$ but there is not if one uses a representation on $\C^4$.

\subsubsection{Trace-free Property of Real Representations}
\noindent An important feature of real representations is that all $e_I$ which are not $\pm 1$ are trace-free.  Since $tr(AB) = tr(BA)$, any matrix that is the product of anti-commuting matrices must be trace-free.  We will show that any $e_I$ can be written in such a way.  This is immediate if $|I|$ is even since the first $|I| - 1$ factors anti-commute with the last factor.  For odd $|I|$, first assume that $e_I \ne \gamma$ so that we can find $e_j$ such that $j \notin I$.  Then $e_I$ is proportional to $e_j e_I e_j$ and $e_j e_I$ and $e_j$ anti-commute.  Now if $|I|$ is odd and $e_I = \gamma \ne \pm 1$ then the Clifford algebra must be type $\C$ and $\gamma^2 = -1$.  However, this means that $\gamma^t = -\gamma$ so that $tr(\gamma) = tr(\gamma^t) = -tr(\gamma) = 0$.

\section{Bilinear Forms on Spinors}
Bilinear forms on spinors are discussed in \cite{lounesto} but from a different perspective.  Our approach is to look for real valued bilinear forms on the space of spinors, $S$, such that vectors are self-adjoint, up to sign.  That is, a bilinear function $(\cdot, \cdot): S \times S \to \R$ such that 
\begin{equation}
(\phi, v \psi) = \pm (v \phi, \psi) \label{sa}
\end{equation}
for all $v \in \Rpq$, $\phi,\psi \in S$.  If $\Cpq$ is type $\R$ then $\Cpq \simeq End_\R(S)$ so that the form can be represented as $(\phi, \psi) \mapsto \phi^t A \psi$, where $A \in \Cpq$ (where we are identifying an element $A \in End_\R(S)$ with its image under an isomorphism $End_\R(S) \to \Cpq$).  The condition (\ref{sa}) then becomes
\begin{equation*}
Av = \pm v^t A
\end{equation*}
for all $v \in \Rpq$.  If $1 \le i \le p$ then $e_i^t = e_i$ and if $p+1 \le i \le p+q$  then $e_i^t = -e_i$.  Therefore we must have that
\begin{equation*}
A e_i = \begin{cases}
\pm e_i A &\text{ if $1 \le i \le p$} \\
\mp e_i A &\text{ if $p+1 \le i \le p+q$.}
\end{cases}
\end{equation*}
Put
\begin{equation*}
A = \sum_{I \subseteq \{1, 2, \ldots, p+q \}} A_I e_I.
\end{equation*}
Since $e_i$ either commutes or anti-commutes with each $e_I$, if $A_I \ne 0$ then we must have that
\begin{equation*}
e_I e_i = \begin{cases}
\pm e_i e_I &\text{ if $1 \le i \le p$} \\
\mp e_i e_I &\text{ if $p+1 \le i \le p+q$.}
\end{cases}
\end{equation*}
It follows that if $A_I \ne 0$ then $e_I$ must be either $e_1 e_2 \ldots e_p$ or $e_{p+1} e_{p+2} \ldots e_{p+q}$.  Thus $A$ is, up to scale, either $e_1 e_2 \ldots e_p$ or $e_{p+1} e_{p+2} \ldots e_{p+q}$.  We denote the former element by $\gamma_p$ and the latter by $\gamma_q$.  We define two real bilinear forms
\begin{gather*}
(\phi, \psi)_+ = \phi^t \gamma_p \psi, \\
(\phi, \psi)_- = \phi^t \gamma_q \psi.
\end{gather*}
When $\Cpq$ is a corner algebra, i.e. $\gamma = \pm 1$, we have that $\gamma_p = \pm \gamma_q$.  Thus there is only one bilinear form, which we denote by $(\cdot, \cdot)$.  \\

\noindent We see that 
\begin{gather}
\gamma_p^2 = (e_1 e_2 \ldots e_p)(e_1 e_2 \ldots e_p) = (-1)^{(p-1) + (p-2) + \ldots + 1} e_1^2 e_2^2 \ldots e_p^2 \notag \\ = (-1)^{p(p-1)/2} \label{gp2}
\end{gather}
and, similarly,
\begin{equation}
\gamma_q^2 = (-1)^{q(q-1)/2} (-1)^q = (-1)^{q(q+1)/2}. \label{gq2}
\end{equation}
Thus $(\cdot, \cdot)_+$ is symmetric if $p = 0$ or 1 (mod 4) and anti-symmetric if $p=2$ or 3 (mod 4) and $(\cdot, \cdot)_-$ is symmetric if $q = 0$ or 3 (mod 4) and anti-symmetric if $p = 1$ or 2 (mod 4). \\

\noindent Given any vector $v \in \Rpq$, we can put $v = v_+ + v_-$, with $v_+ \in span\{e_1, e_2, \ldots e_p\}$ and $v_- \in span\{e_{p+1}, e_{p+2}, \ldots, e_{p+q}\}$.  We then have
\begin{align}
(\phi, v \psi)_+ &= \phi^\dagger \gamma_p (v_+ + v_-) \psi \nonumber \\ 
&= \phi^\dagger ((-1)^{p+1} v_+ + (-1)^p v_-) \gamma_p \psi \nonumber \\
&= \phi^\dagger((-1)^{p+1} v_+^\dagger + (-1)^{p+1} v_-^\dagger) \gamma_p \psi \nonumber \\
&= (-1)^{p+1}(v \phi, \psi)_+. \label{sa+}
\end{align}
A similar calculation yields
\begin{equation}
(\phi, v \psi)_- = (-1)^q (v \phi, \psi)_-. \label{sa-}
\end{equation}
If $\Cpq$ is of type $\mathbb{H}$ then from (\ref{tis-1}), $p^t = -p$ for $p \in span \{I, J, K\}$.  Further, any pure quaternion commutes with all of $\Cpq$.  We therefore have that
\begin{align*}
(p \phi, \psi)_\pm &= \phi^t p^t \gamma_{p(q)} \psi \\
&= -\phi^t \gamma_{p(q)} p \psi \\
&= -(\phi, p \psi)_\pm.  
\end{align*}
This means that
\begin{equation}
(q \phi, \psi)_\pm = (\phi, \bar{q} \psi)_\pm \text{ for all $q \in \mathbb{H}$}. \label{conjsym}
\end{equation}

\subsection{Signature of $(\cdot, \cdot)_\pm$}
We will now show that when $(\cdot, \cdot)_\pm$ is symmetric, the signature is always either neutral or definite.  The definite case occurs only when $p$ (respectively $q$) $= 0$.  Without loss of generality, we will show that $(\cdot, \cdot)_+$ has neutral signature when $\gamma_p$ is symmetric and $\gamma_p \ne 1$ (so that $p$ is necessarily non-zero).  Since $\gamma_p$ is symmetric, we have that $\gamma_p^2 = 1$.  Therefore its eigenvectors span the spinor space and its eigenvalues are $\pm 1$.  Let $S^\pm$ be the eigenspaces with eigenvalue $\pm 1$.  \\

\noindent First assume that $p$ is even or $q \ne 0$.  Let
\begin{equation*}
e = \begin{cases}
e_1 &\text{ if $p$ is even} \\
e_{p+1} &\text{ if $p$ is odd}.
\end{cases}
\end{equation*}
We have that $e \gamma_p = -\gamma_p e$ so that if $\psi \in S^+$ then $e \psi \in S^-$.  Since $e$ is invertible it follows that $S^+$ and $S^-$ have the same dimension so that $(\cdot, \cdot)_+$ has neutral signature. \\

\noindent Assume now that $p$ is odd and $q = 0$.  If $\Cpq = Cl_{p,0}$ is of type $\R$ or $\mathbb{H}$ then we have that $\gamma_p = \gamma = \pm 1$, so that $(\cdot, \cdot)_+$ is the Euclidean inner product.  On the other hand, if $\Cpq$ is type $\C$ then $\gamma_p^2 = \gamma^2 = -1$ so that $(\cdot, \cdot)_+$ is anti-symmetric.

\subsection{$Pin$ and $Spin$}
Recall that the action of the $Pin$ and $Spin$ groups on vectors preserves the metric.  We also see that the action of the $Pin$ and $Spin$ groups on spinors (which is just left multiplication) preserves $(\cdot, \cdot)_\pm$ up to sign:
\begin{equation*}
(u \phi, u \psi)_\pm = \pm (\phi, \psi)_\pm, 
\end{equation*}
for all $u \in Pin(p,q), \phi,\psi \in S$.  This is evident from (\ref{sa+}), (\ref{sa-}), and the fact that for $u \in Pin(p,q), u\tilde{u} = \pm 1$.  We can define a subgroup $Pin_+(p,q)$  of $Pin(p,q)$ by
\begin{equation*}
Pin_+(p,q) = \{ u \in Pin(p,q) : u\tilde{u} = 1 \}.
\end{equation*}
Then we see that the action of $Pin_+(p,q)$ on spinors preserves $(\cdot, \cdot)_+$ if $p$ is odd and preserves $(\cdot, \cdot)_-$ if $q$ is even.  Furthermore, $Spin_+(p,q) = Pin_+(p,q) \cap Spin(p,q)$ always preserves $(\cdot, \cdot)_\pm$.

\subsection{Relations Between Spinors and Forms}
Denote the exterior algebra on $\Rpq$ by $\Lambda(\Rpq)$.  The metric $g$ on $\Rpq$ induces a metric on $\Lambda(\Rpq)$ by
\begin{equation*}
g(e_{i_1} e_{i_2} \ldots e_{i_m}, e_{j_1} e_{j_2} \ldots e_{j_n}) = \prod_{i_k = j_l} g(e_{i_k},e_{j_l}),
\end{equation*}
where $i_k < i_{k+1}$ and $j_k < j_{k+1}$.  If $\Cpq$ is a corner algebra and $q$ is odd, then there is an ambiguity because of self-duality.  Recall that in a corner algebra $e_1 \ldots e_{p+q} = \pm 1$.  Therefore $e_I = \pm e_{I^c}$ (where $I^c$ is the complement of $I$).  But $g(e_I, e_I) = -g(e_{I^c}, e_{I^c})$ since the number of unit vectors which square to -1 in $e_I$ will have the opposite parity of the amount in $e_{I^c}$.  To resolve this problem, we cut off forms at degree $\frac{p+q-1}{2}$.  That is, we consider $\Lambda(\Rpq)$ to be $\bigoplus_{k=0}^{(p+q-1)/2} \Lambda^k(\Rpq)$. \\

\noindent We can use the bilinear forms to associate elements $\om_k^\pm(\phi, \psi)$ of the dual space of $\Lambda^k(\Rpq)$ with spinors $\phi, \psi$ by
\begin{equation*}
\om_k^\pm(\phi, \psi)(u) = (\phi, u \psi)_\pm, u \in \Lambda^k(\Rpq).
\end{equation*}
The induced metric on $\Lambda^k(\Rpq)$ allows us to consider $\om_k^\pm(\phi, \psi)$ to be an element in $\Lambda^k(\Rpq)$. \\

\noindent We then have that
\begin{equation*}
g(\om_k^\pm(\phi,\psi), \om_k^\pm(\phi, \psi)) = \sum_{|I| = k} g(e_I, e_I) (\phi, e_I \psi)_\pm^2.
\end{equation*}

\noindent Because of vectors being self-adjoint (up to sign), we see that the function $\om_k^\pm : S \times S \to \Lambda^k(\Rpq)$ is either symmetric or anti-symmetric in its two spinor arguments.  More precisely, by (\ref{gp2}), (\ref{gq2}), (\ref{sa+}), and (\ref{sa-}) we have that
\begin{align*}
(\psi, e_{i_1} \ldots e_{i_k} \phi)_+ &= (-1)^{k(p+1)}(e_{i_k} \ldots e_{i_1} \psi, \phi)_+ \\
&= (-1)^{k(p+1) + k(k-1)/2}(e_{i_1} \ldots e_{i_k} \psi, \phi)_+ \\
&= (-1)^{k(p+1) + k(k-1)/2 + p(p-1)/2}(\phi, e_{i_1} \ldots e_{i_k} \psi)_+ \\
&= (-1)^{\frac{1}{2}((k+p)^2 + k - p)}.
\end{align*}
Thus
\begin{align}
\om_k^+(\psi, \phi) &= (-1)^{\frac{1}{2}((k+p)^2 + k - p)} \om_k^+(\phi, \psi) \label{om+sym} \\
&= \om_k^+(\phi, \psi) \begin{cases} \notag
(-1)^{k(k-3)/2} &\text{ if $p \equiv 0$} \\
(-1)^{k(k-1)/2} &\text{ if $p \equiv 1$} \\
(-1)^{(k-1)(k-2)/2} &\text{ if $p \equiv 2$} \\
(-1)^{(k-2)(k-3)/2} &\text{ if $p \equiv 3$}.
\end{cases}
\end{align}
A similar calculation gives
\begin{align}
\om_k^-(\psi, \phi) &= (-1)^{\frac{1}{2}((k+q)^2+q-k)} \om_k^-(\phi, \psi) \label{om-sym} \\
&= \om_k^-(\phi, \psi) \begin{cases} \notag
(-1)^{k(k-1)/2} &\text{ if $q \equiv 0$} \\
(-1)^{(k-1)(k-2)/2} &\text{ if $q \equiv 1$} \\
(-1)^{(k-2)(k-3)/2} &\text{ if $q \equiv 2$} \\
(-1)^{k(k-3)/2} &\text{ if $q \equiv 3$}.
\end{cases}
\end{align}
These imply that
\begin{align}
\om_k^+(\phi,\phi) \ne 0 \text{ for all $\phi$ if and only if $(k+p)^2 - p + k \equiv 0$ (mod 4)} \label{vpp+} \\
\om_k^-(\phi,\phi) \ne 0 \text{ for all $\phi$ if and only if $(k+q)^2 + q - k \equiv 0$ (mod 4)} \label{vpp-}.
\end{align}
Note that because of (\ref{conjsym}), for any pure quaternion $q$ we have that
\begin{equation}
\om_k^\pm(\phi, \phi) = 0 \text{ for all $\phi$ if and only if } \om_k^\pm(\phi, q \phi) \ne 0 \text{ for all $\phi$}. \label{vppH}
\end{equation}

\section{Derivation of Identities}
\noindent Motivated by the results obtained in the appendix of $\cite{penroserindler}$, the starting point for finding identities is the consideration of two rank (2,2) spinorial tensors $T_\pm$ defined by
\begin{equation}
{T_\pm}^{\mu \nu}_{\rho \sigma} = \sum_{I \subseteq \{1, \ldots, p+q\}} (\pm 1)^{|I|} g(e_I, e_I) {e_I}_\rho^\mu {e_I}_\sigma^\nu. \label{defT}
\end{equation}
A more specialized version of the above tensor is used in \cite{mtheory} as the starting point for their Fierz identities.  Let $\langle \cdot, \cdot \rangle = (\cdot, \cdot)_+$ or $(\cdot, \cdot)_-$ and $\Gamma = \gamma_p$ or $\gamma_q$ accordingly.  The bilinear form allows us to lower indices via $\psi_\nu = \Gamma_{\mu \nu} \psi^\mu$.  The inverse metric is defined by $\Gamma^{\mu \nu} \phi_\nu = \phi^\mu$.  Since $\gamma_{p(q)}^{-1} = \gamma_{p(q)}^t$, the matrix forms of $\Gamma^{\mu \nu}$ and $\Gamma_{\mu \nu}$ are the same.  From (\ref{defT}) we have
\begin{equation*}
{T_\pm}_{\kappa \rho \theta \sigma} = \sum_{I \subseteq \{1, \ldots, p+q\}} (\pm 1)^{|I|} g(e_I, e_I) {e_I}_\rho^\mu {e_I}_\sigma^\nu \Gamma_{\kappa \mu} \Gamma_{\theta \nu}
\end{equation*}
so that
\begin{align}
{T_\pm}_{\kappa \rho \theta \sigma} \phi^\kappa \psi^\rho \alpha^\theta \beta^\sigma &= \sum_{I \subseteq \{1, \ldots, p+q\}} (\pm 1)^{|I|} g(e_I, e_I) \langle \phi, e_I \psi \rangle \langle \alpha, e_I \beta \rangle \nonumber \\
&= \sum_{k=0}^{p+q} (\pm 1)^k \om_k(\phi, \psi) \cdot \om_k(\alpha, \beta), \label{Tandomega}
\end{align}
where $\om_k = \om_k^+$ or $\om_k^-$ according to whether $\Gamma = \gamma_p$ or $\gamma_q$.  An immediate symmetry relation is
\begin{equation}
{T_\pm}_{\kappa \rho \theta \sigma} = {T_\pm}_{\theta \sigma \kappa \rho}. \label{sym1}
\end{equation}
Further, since $\om_k(\psi, \phi) = \pm \om_k(\phi, \psi)$, we have that
\begin{equation}
{T_\pm}_{\kappa \rho \theta \sigma} = {T_\pm}_{\rho \kappa \sigma \theta} \label{sym2}.
\end{equation}

\noindent Define
\begin{equation*}
T_\pm(\phi, \psi, \alpha, \beta) = {T_\pm}_{\kappa \rho \theta \sigma} \phi^\kappa \psi^\rho \alpha^\theta \beta^\sigma.
\end{equation*}

\noindent An important fact is that the set $\{ \pm 1, \pm e_I : I \subseteq \{1, 2, \ldots, p+q \} \}$ forms a multiplicative group $G$ (of order $2^{p+q+1}$ if $p+q$ is even and $2^{p+q}$ if $p+q$ is odd).  Since every term in $T_\pm$ is quadratic in the $e_I$'s, it is natural to consider the quotient group $E = G/\{\pm 1\}$.  Note that $E$ is an abelian group of order $\frac{1}{2} |G|$ and every (non-identity) element has order two.  For the rest of this section, we identify $e_I$ with its image under the projection map $\pi : G \to E$.  The product of two elements of $E$ is then $e_I e_J = e_{I \Delta J}$ where $\Delta$ is the symmetric difference, i.e. $I \Delta J = (I \cup J) \backslash (I \cap J)$.  Indeed, $E$ is isomorphic to the group one gets when the set is the power set of $\{1, \ldots, p+q \}$ and the operation is $\Delta$. \\

\noindent We can write
\begin{equation*}
T_\pm(\phi, \psi, \alpha, \beta) = \sum_{e_I \in E} (\pm 1)^{|I|} g(e_I, e_I)\langle\phi, e_I \psi\rangle \langle \alpha, e_I \beta \rangle
\end{equation*}
so that for $e_J \in E$
\begin{align*}
T_\pm(\phi, e_J \psi, \alpha, e_J \beta) &= \sum_{e_I \in E} (\pm 1)^{|I|} g(e_I,e_I) \langle\phi, e_I e_J \psi\rangle \langle\alpha e_I e_J \beta\rangle \\
&= \sum_{e_I \in E} (\pm 1)^{|I \Delta J|} g(e_I e_J^{-1}, e_I e_J^{-1}) \langle\phi, e_I \psi\rangle \langle\alpha, e_I \beta \rangle \\
&= \sum_{e_I \in E} (\pm 1)^{|I| + |J|} g(e_J, e_J) g(e_I, e_I) \langle\phi, e_I \psi\rangle \langle\alpha, e_I \beta \rangle \\
&= (\pm 1)^{|J|} g(e_J, e_J) T(\phi, \psi, \alpha, \beta),
\end{align*}
where the second equality follows from $E$ being a group and the third from $| I \Delta J| = | I \cup J | - | I \cap J |  = |I| + |J| - 2|I \cap J|$ so that $(\pm 1)^{|I \Delta J|} = (\pm 1)^{|I| + |J|}$.  By (\ref{sym2}) we have
\begin{equation*}
T_\pm (e_J \phi, \psi, e_J \alpha, \beta) = (\pm 1)^{|J|} g(e_J, e_J) T_\pm(\phi, \psi, \alpha, \beta).
\end{equation*}

\noindent In particular, we have
\begin{subequations} \label{Te}
\begin{align}
T_\pm(\phi, e_i \psi, \alpha, e_i \beta) &= \pm e_i^2 T_\pm(\phi, \psi, \alpha, \beta) \\
T_\pm(e_i \phi, \psi, e_i \alpha, \beta) &= \pm e_i^2 T_\pm(\phi, \psi, \alpha, \beta).
\end{align}
\end{subequations}

\subsection{Type $\R$ Subordinate Algebras}
\noindent We will first assume that $\Cpq$ is a type $\R$ subordinate algebra, so that $\{ e_I : I \subseteq \{1, \ldots, p+q\} \}$ is a basis for $End_\R(\mathcal{S})$.  We then have that $\{ e_I \otimes e_J : I, J \subseteq \{1, \ldots, p+q\} \}$ is a basis for $\mathcal{S} \otimes \mathcal{S} \otimes \mathcal{S} \otimes \mathcal{S}$ where the action of $e_I \otimes e_J$ on four spinors is $(\phi, \psi, \alpha, \beta) \mapsto \langle \phi, e_I \alpha \rangle \langle \psi, e_J \beta \rangle$.  It turns out that the symmetries given in (\ref{Te}) are sufficient to find the expansion of $T_\pm$ in the basis that groups the first and third arguments, and the second and fourth.  Once we have this, we can re-raise the two lowered indices to get a factorization of the original (2,2) tensor in (\ref{defT}), which groups the two spinors and the two cospinors.  \\

\noindent Thus we want to determine coefficients $t_{I,J}^\pm \in \R$ so that we can write $T_\pm$ in this basis,
\begin{equation}
T_\pm (\phi, \psi, \alpha, \beta) = \sum_{I, J \subseteq \{1, \ldots, p+q\}} t_{I,J}^\pm \langle \phi, e_I \alpha \rangle \langle \psi, e_J \beta \rangle. \label{Tnewbasis}
\end{equation}
For concreteness, we will assume that $p$ is odd and that $\Gamma = \gamma_p$.  We first consider $T_+$.  Since $p$ is odd, we have that $(\phi, v \psi)_+ = (v \phi, \psi)_+$ (\ref{sa+}) for all $v \in \Rpq$.  From (\ref{Te}a) and (\ref{Tnewbasis}) we have that
\begin{align*}
\sum_{I, J} t^+_{I, J} (\phi, e_I \alpha)_+ (e_i \psi, e_J e_i \beta)_+ = e_i^2 \sum_{I, J} t^+_{I, J} (\phi, e_I \alpha)_+ (\psi, e_J \alpha)_+ \\
\Rightarrow \sum_{I, J} t^+_{I, J} (\phi, e_I \alpha)_+ (\psi, e_i e_J e_i \beta)_+ = e_i^2 \sum_{I, J} t^+_{I, J} (\phi, e_I \alpha)_+ (\psi, e_J \alpha)_+.
\end{align*}
Because $e_i e_J e_i = \pm e_J$ and $(\cdot, e_I \cdot)_+ (\cdot, e_J \cdot)_+$ is a basis for $S \otimes S \otimes S \otimes S$, we must have that
\begin{align}
e_i e_J e_i = e_i^2 e_J \nonumber \\
\Rightarrow e_J e_i = e_i e_J \label{comrelT+}
\end{align}
whenever $t^+_{I, J}$ is non-zero for some $I$.  Since $p+q$ is even, the only element in the center of $\Cpq$ is the identity.  Thus $e_J = 1$.  After using the same argument using the symmetry given in (\ref{Te}b), we see that
\begin{equation*}
T_+(\phi, \psi, \alpha, \beta) = C(\phi, \alpha)_+ (\psi, \beta)_+
\end{equation*}
for some $C \in \R$.  Since the factorization of $T_-$ in (\ref{Te}) differs from that of $T_+$ by a negative sign, the commutation relation for $T_-$ differs from (\ref{comrelT+}) by a minus sign:
\begin{equation*}
e_J e_i = -e_i e_J \text{ if $t^-_{I,J} \ne 0$ for some $I$}.
\end{equation*}
Since $\gamma$ is the only element which anti-commutes with every vector, we must have that
\begin{equation*}
T_-(\phi,\psi,\alpha,\beta) = C (\phi, \gamma \alpha)_+ (\psi, \gamma \beta)_+ = C (\phi, \alpha)_- (\psi, \beta)_-.
\end{equation*}

\noindent Analogous computations can be done to find the factorization of $T_\pm$ for different parities of $p$ and different choices of $\Gamma$.  In some cases $e_J$ must be in the center (so it must be $1$) and in others it must anti-commute with all vectors (so it must be $\gamma$).  We summarize the results in tables \ref{T+table} and \ref{T-table}.  Note that in these tables $C$ is not uniform across entries. \\
\begin{table}[!h]
\begin{center}
\begin{tabular}{|c|c|c|}
\hline
\backslashbox[0.5cm]{$~~~~p$}{$\Gamma$} & $\gamma_p$ & $\gamma_q$ \\ \hline
even & $C (\phi, \alpha)_- (\psi, \beta)_-$ & $C (\phi, \alpha)_- (\psi, \beta)_-$ \\ \hline
odd & $C (\phi, \alpha)_+ (\psi, \beta)_+$ & $C (\phi, \alpha)_+ (\psi, \beta)_+$ \\ \hline
\end{tabular}
\caption{Factorization of $T_+(\phi,\psi,\alpha,\beta)$.} \label{T+table}
\end{center}
\end{table}

\begin{table}[!h]
\begin{center}
\begin{tabular}{|c|c|c|}
\hline
\backslashbox[0.5cm]{$~~~~p$}{$\Gamma$} & $\gamma_p$ & $\gamma_q$ \\ \hline
even & $C (\phi, \alpha)_+ (\psi, \beta)_+$ & $C (\phi, \alpha)_+ (\psi, \beta)_+$ \\ \hline
odd & $C (\phi, \alpha)_- (\psi, \beta)_-$ & $C (\phi, \alpha)_- (\psi, \beta)_-$ \\ \hline
\end{tabular}
\caption{Factorization of $T_-(\phi,\psi,\alpha,\beta)$.} \label{T-table}
\end{center}
\end{table}

\noindent To determine the value of $C$ we will need to raise two indices to recover ${T_\pm}_{\rho \sigma}^{\mu \nu}$.  This process is slightly different for the cases where the factorization of $T_\pm$ uses $\Gamma$ as the inner product (e.g. when $p$ is even and using $T_+$ and $\Gamma = \gamma_p$) and those where the factorization uses $\gamma \Gamma$ as the inner product (e.g. when $p$ is even and using $T_+$ and $\Gamma = \gamma_q$). \\

\noindent In the first case, we have (in indices)
\begin{equation*}
{T_\pm}_{\kappa \rho \theta \sigma} = C \Gamma_{\kappa \theta} \Gamma_{\rho \sigma}
\end{equation*}
so that
\begin{align}
{T_\pm}_{\rho \sigma}^{\mu \nu} &= C \Gamma_{\kappa \theta} \Gamma_{\rho \sigma} \Gamma^{\mu \kappa} \Gamma^{\nu \theta} \notag \\
&= C \Gamma_{\kappa \theta} \Gamma^{\nu \theta} \Gamma_{\rho \sigma} \Gamma^{\mu \kappa} \notag \\
&= C \delta_\kappa^\nu \Gamma_{\rho \sigma} \Gamma^{\mu \kappa} \notag \\
&= C \Gamma_{\rho \sigma} \Gamma^{\mu \nu}, \label{T22fac}
\end{align}
the second to last inequality coming from the fact that $\Gamma^t \Gamma = 1$.  To determine $C$ we consider the double trace ${T_\pm}_{\mu \nu}^{\mu \nu}$.  From (\ref{defT}) we have
\begin{equation*}
{T_\pm}_{\mu \nu}^{\mu \nu} = \sum_{I} g(e_I, e_I) {e_I}_\mu^\mu {e_I}_\nu^\nu.
\end{equation*}
But, as discussed in section 1, all $e_I$ are trace-free except for $e_\emptyset = 1$.  Thus
\begin{equation*}
{T_\pm}_{\mu \nu}^{\mu \nu} = tr(1)^2 = 2^{p+q}.
\end{equation*}
From (\ref{T22fac}), we see that ${T_\pm}_{\mu \nu}^{\mu \nu}$ is $tr(C \gamma_p^t \gamma_p) = C tr(1) = C 2^{(p+q)/2}$.  Thus $C = 2^{(p+q)/2}$. \\

\noindent  Consider now the case where the factorization of $T_\pm$ uses the bilinear form induced by $\gamma \Gamma$.  The cases $\Gamma = \gamma_p$ and $\Gamma = \gamma_q$ are slightly different but analogous.  Consider the case where $\Gamma = \gamma_p$ so that we have
\begin{equation*}
{T_\pm}_{\kappa \rho \theta \sigma} = C {\gamma_q}_{\kappa \theta} {\gamma_q}_{\rho \sigma}.
\end{equation*}
Raising two indices we have
\begin{align}
{T_\pm}_{\rho \sigma}^{\mu \nu} &= C {\gamma_q}_{\kappa \theta} {\gamma_q}_{\rho \sigma} {\gamma_p}^{\mu \kappa} {\gamma_p}^{\nu \theta} \notag \\
&= C {\gamma_q}_{\kappa \theta} {\gamma_p}^{\nu \theta} {\gamma_p}^{\mu \kappa} {\gamma_q}_{\rho \sigma} \notag \\
&= C \epsilon_1 \gamma_\kappa^\nu {\gamma_p}^{\mu \kappa} {\gamma_q}_{\rho \sigma} \notag \\
&= C \epsilon_1 \epsilon_2 {\gamma_q}^{\mu \nu} {\gamma_q}_{\rho \sigma},
\end{align}
where $\epsilon_i$ are such that
\begin{align*}
\gamma_q \gamma_p^t &= \epsilon_1 \gamma \\
\gamma_p \gamma &= \epsilon_2 \gamma_q.
\end{align*}
We have that
\begin{align*}
\gamma_q \gamma_p^t &= e_{p+1} \ldots e_{p+q} e_p^t \ldots e_1^t \\
&= e_{p+1} \ldots e_{p+q} e_p \ldots e_1 \\
&= (-1)^{p(p-1)/2} e_{p+1} \ldots e_{p+q} e_1 \ldots e_p \\
&= (-1)^{p(p-1)/2 + pq} \gamma 
\end{align*}
and
\begin{align*}
\gamma_p \gamma &= e_1 \ldots e_p e_1 \ldots e_{p+q} \\
&= (-1)^{p(p-1)/2} e_1^2 \ldots e_p^2 e_{p+1} \ldots e_{p+q} \\
&= (-1)^{p(p-1)/2} \gamma_q.
\end{align*}
Thus
\begin{equation*}
\epsilon_1 \epsilon_2 = (-1)^{pq}.
\end{equation*}
Considering ${T_\pm}_{\mu \nu}^{\mu \nu}$, we find that $C = (-1)^{pq} 2^{(p+q)/2}$.  When $\Gamma = \gamma_q$, a similar argument shows that $C = (-1)^{((p+q)^2 + q(q+2) - p^2)/2} 2^{(p+q)/2}$.  Since in a subordinate algebra $p+q$ is even, this simplifies to $C = (-1)^{(q(q+2) - p^2)/2} 2^{(p+q)/2}$.  We now can update tables \ref{T+table} and \ref{T-table} to get tables \ref{T+tablenew} and \ref{T-tablenew}. \\

\begin{table}[!h]
\begin{center}
\begin{tabular}{|c|c|c|}
\hline
\backslashbox[0.5cm]{$~~~~p$}{$\Gamma$} & $\gamma_p$ & $\gamma_q$ \\ \hline
even & $2^{(p+q)/2} (\phi, \alpha)_- (\psi, \beta)_-$ & $2^{(p+q)/2} (\phi, \alpha)_- (\psi, \beta)_-$ \\ \hline
odd & $2^{(p+q)/2} (\phi, \alpha)_+ (\psi, \beta)_+$ & $-2^{(p+q)/2} (\phi, \alpha)_+ (\psi, \beta)_+$ \\ \hline
\end{tabular}
\caption{Factorization of $T_+(\phi,\psi,\alpha,\beta)$ for type $\R$ subordinate algebras.} \label{T+tablenew}
\end{center}
\end{table}

\begin{table}[!h]
\begin{center}
\begin{tabular}{|c|c|c|}
\hline
\backslashbox[0.5cm]{$~~~~p$}{$\Gamma$} & $\gamma_p$ & $\gamma_q$ \\ \hline
even & $2^{(p+q)/2} (\phi, \alpha)_+ (\psi, \beta)_+$ & $2^{(p+q)/2} (\phi, \alpha)_+ (\psi, \beta)_+$ \\ \hline
odd & $-2^{(p+q)/2} (\phi, \alpha)_- (\psi, \beta)_-$ & $2^{(p+q)/2} (\phi, \alpha)_- (\psi, \beta)_-$ \\ \hline
\end{tabular}
\caption{Factorization of $T_-(\phi,\psi,\alpha,\beta)$ for type $\R$ subordinate algebras.} \label{T-tablenew}
\end{center}
\end{table}

\noindent It is somewhat surprising that the factorization of ${T_\pm}_{\rho \sigma}^{\mu \nu}$ is either ${\gamma_p}_{\rho \sigma} {\gamma_p}^{\mu \nu}$ or ${\gamma_q}_{\rho \sigma} {\gamma_q}^{\mu \nu}$, despite the definition of ${T_\pm}_{\rho \sigma}^{\mu \nu}$ (\ref{defT}) not singling out $\gamma_p$ or $\gamma_q$.  Since $T_\pm$ is a fruitful object (its complex version is used in \cite{penroserindler} and a more specialized version is used for Fierz identities in \cite{mtheory}) this is seen as a hint that the bilinear forms induced by $\gamma_p$ and $\gamma_q$ are important structures. \\

\subsubsection{Cross-Symmetry} \label{Rsubcs}
The factorization of $T_\pm$ alone gives many Fierz identities by specializing, for example, to $\psi = \phi$ and $\alpha = \beta$ and noticing that many terms vanish because of (\ref{vpp+}) and (\ref{vpp-}).  However, we will here derive more four-spinor symmetries, which we call \textit{cross-symmetry} and which are encoded in certain involutory matrices.  These matrices along with tables \ref{T+tablenew} and \ref{T-tablenew}, contain all of the information for the Fierz identities we will derive.  We can look at tables \ref{T+tablenew} and \ref{T-tablenew} as giving $\om_0^\pm(\phi, \alpha) \cdot \om_0^\pm (\psi, \beta)$ in terms of $\om_k^\pm(\phi, \psi) \cdot \om_k^\pm(\alpha, \beta)$.  The cross-symmetry relations give us $\om_m^\pm(\alpha, \psi) \cdot \om_m^\pm (\phi, \beta)$ in terms of $\om_k^\pm(\phi, \psi) \cdot \om_k^\pm(\alpha, \beta)$, for arbitrary $m$. \\

\noindent From tables \ref{T+tablenew} and \ref{T-tablenew}, we see that given any subordinate type $\R$ algebra and choice of $\Gamma$, exactly one of $T_+$ or $T_-$ uses the bilinear form induced by $\Gamma$ in its factorization.  Which of $T_\pm$ to use for a given partity of $p$ and $\Gamma$ is shown in table \ref{fac=gam}. \\

\begin{table}[!h]
\begin{center}
\begin{tabular}{|c|c|c|}
\hline
\backslashbox[0.5cm]{$~~~~p$}{$\Gamma$} & $\gamma_p$ & $\gamma_q$ \\ \hline
even & $T_-$ & $T_+$ \\ \hline
odd & $T_+$ & $T_-$ \\ \hline
\end{tabular}
\caption{Which of $T_\pm$ uses $\Gamma$ in its factorization.} \label{fac=gam}
\end{center}
\end{table}

\noindent It turns out that the computations that follow \textit{do not} depend on the parity of $p$ and the choice of $\Gamma$.  However, to show computations explicitly, we first assume that $p$ (and therefore $q$) is even and we use $T_+$.  From (\ref{Tandomega}) and table \ref{T+tablenew} we have (noting that $(\phi, \alpha)_- (\psi, \beta)_- = \om_0^-(\phi, \alpha) \cdot \om_0^-(\psi,\beta)$)
\begin{equation*}
2^{(p+q)/2} \om_0^-(\phi, \alpha) \cdot \om_0^- (\psi, \beta) = \sum_{k=0}^{p+q} \om_k^-(\phi, \psi) \cdot \om_k^-(\alpha, \beta).
\end{equation*}
From $\ref{om-sym}$, interchanging $\phi$ and $\psi$ in the above equation and then rewriting $\om_0^-(\psi, \alpha)$ as $(-1)^{\frac{1}{2}(q^2 + q)} \om_0(\alpha, \psi)$ gives
\begin{gather}
(-1)^{\frac{1}{2}(q^2 + q)} 2^{n/2} \om_0^-(\alpha, \psi) \cdot \om_0^-(\phi, \beta) = \sum_{k=0}^{p+q} (-1)^{\frac{1}{2}((k+q)^2 + q - k)} \om_k^-(\phi, \psi) \cdot \om_k^-(\alpha, \beta) \notag \\
\Rightarrow  \om_0^-(\alpha, \psi) \cdot \om_0(\phi, \beta) = \sum_{k=0}^{p+q} \eps_k \om_k^-(\phi, \psi) \cdot \om_k^-(\alpha, \beta) \label{startpt}
\end{gather}
where
\begin{align}
\eps_k &= \frac{1}{2^{(p+q)/2}} (-1)^{\frac{1}{2}k(k - 1) + qk} \notag \\
&= \frac{1}{2^{(p+q)/2}} (-1)^{\frac{1}{2}k(k-1)} \text{ (since $q$ is even).} \label{epsdef}
\end{align}
It turns out (surprisingly) that $\eps_k$ is the same, regardless of the parity of $p$ and the choice of $T_\pm$ and $\Gamma$.  From now on, we will therefore work in general with $\langle \cdot, \cdot \rangle$ representing $(\cdot, \cdot)_\pm$ depending on whether $\Gamma = \gamma_p$ or $\gamma_q$.  Similarly, we write $\om$ for $\om^\pm$ and $T$ or $T_\pm$.  \\

%Straightforward computations yield the following table, which gives the value of $\eps_k$ as well as $\Gamma$ for the four different cases.

%\begin{table}[!h]
%\begin{center}
%\begin{tabular}{c|c|c|}
%\cline{2-3}
%& $T_+$ & $T_-$ \\ \hline
%\multicolumn{1}{|c|}{\multirow{2}{*}{$p$ even}} & 
%$(-1)^{\frac{1}{2}k(k-1)}$ & $(-1)^{\frac{1}{2}k(k+1)}$ \\ \cline{2-3}
%\multicolumn{1}{|c|}{} &
%$\Gamma = \gamma_q$ & $\Gamma = \gamma_p$ \\ \hline
%\multicolumn{1}{|c|}{\multirow{2}{*}{$p$ odd}} & 
%$(-1)^{\frac{1}{2}k(k-1)}$ & $(-1)^{\frac{1}{2}k(k-1)}$ \\ \cline{2-3}
%\multicolumn{1}{|c|}{} &
%$\Gamma = \gamma_p$ & $\Gamma = \gamma_q$ \\ \hline
%\end{tabular}
%\caption{$2^{n/2} \eps_k$ and $\Gamma$.} \label{epsvalues}
%\end{center}
%\end{table}

\noindent Recall that $T$ is invariant (up to sign) under $(\psi, \beta) \mapsto (e_J \psi, e_J \beta)$.  This is not the case in (\ref{startpt}) since $\eps_k$ can be different for values of $k$ of the same parity.  Indeed, rewriting the general version of (\ref{startpt}) as
\begin{equation*}
\langle \alpha, \psi \rangle \langle \phi, \beta \rangle = \sum_{|I| \le p+q} \eps_k g(e_I, e_I) \langle \phi, e_I \psi \rangle \langle \alpha, e_I \beta \rangle,
\end{equation*}
replacing $(\psi, \beta) \mapsto (e_J \psi, e_J \beta)$ and multiplying by $g(e_J, e_J)$ we have
\begin{align}
g(e_J, e_J) \langle \alpha, e_J \psi \rangle \langle \phi, e_J \beta \rangle &= \sum_{|I| \le n} \eps_k g(e_J, e_J) g(e_I, e_I) \langle \phi, e_I e_J \psi \rangle \langle \alpha, e_I e_J \beta \rangle \notag \\
g(e_J, e_J) \langle \alpha, e_J \psi \rangle \langle \phi, e_J \beta \rangle &= \sum_{|I| \le n} \eps_{|I \Delta J|} g(e_I, e_I) \langle \phi, e_I \psi \rangle \langle \alpha, e_I \beta \rangle. \label{eJ}
\end{align}
In the last line we appealed to the group structure of $\{ e_I : I \subset \{1, \ldots, n\} \} / \{ \pm 1 \}$.  It follows from from the definition of $\eps_k$ (\ref{epsdef}) that $\eps_k = -\eps_{k+2}$, so that terms of $\om_k(\phi, \psi) \cdot \om_k(\alpha, \beta)$ come in with different signs.  More explicitly, since $|I \Delta J| = |I| + |J| - 2|I \cap J|$ and $\eps_k$ depends only on $k$ mod 4, terms $g(e_I, e_I) (\phi, e_I \psi) (\alpha, e_I \beta)$ and $g(e_{I'},e_{I'}) (\phi, e_{I'} \psi) (\alpha, e_{I'} \beta)$ of $\om_k(\phi, \psi) \cdot \om_k(\alpha, \beta)$ will have the same sign if and only if $|I \cap J| = |I' \cap J|$. \\

\noindent If we sum (\ref{eJ}) over for all $J$ of a fixed size, $j$, then we must get something invariant since the left hand side will be equal to $\om_j(\alpha, \psi) \cdot \om_j(\phi, \beta)$.  Finding what the right hand side will be equal to is a little less trivial.  For a fixed $I \subset \{1, \ldots, p+q \}$, consider how many times, and in what sign, we can get $g(e_I, e_I)\langle \phi, e_I \psi \rangle \langle \alpha, e_I \beta \rangle$.  Put $k = |I|$.  For fixed $m$, $0 \le m \le \min\{j, k \}$ we consider how many $J$ there are such $|I \cap J| = m$.  To do this we first choose $m$ elements from $I$ and then $j - m$ elements from the $n - k$ elements which are in the complement of $I$.  Therefore the coefficient on $g(e_I, e_I)\langle \phi, e_I \psi \rangle \langle \alpha, e_I \beta \rangle$ is $\eps_{k + j - 2m} {k \choose m} {p+q - k \choose j - m} = \eps_{k+j} (-1)^m {k \choose m}{p+q - k \choose j - m}$.  We thus have the following general formula:
\begin{equation}
\om_j(\alpha, \psi) \cdot \om_j(\phi, \beta) = \sum_{k=0}^{p+q} \eps_{k+j} \sum_{m=0}^{\min\{j,k\}} (-1)^m {k \choose m} {p+q-k \choose j-m} \om_k(\phi, \psi) \cdot \om_k(\alpha, \beta). \label{genform}
\end{equation}
Define maps $\Omega$ and $\Omega^*$ from $\mathcal{S} \times \mathcal{S} \times \mathcal{S} \times \mathcal{S}$ into $\R^{p+q+1}$ by
\begin{align}
\Omega &: (\phi, \psi, \alpha, \beta) \mapsto \left(\begin{array}{c} \om_0(\phi, \psi) \cdot \om_0(\alpha, \beta) \\ \vdots \\ \om_{p+q}(\phi, \psi) \cdot \om_{p+q}(\alpha, \beta) \end{array}\right)\label{omdef} \\
\Omega^* &: (\phi, \psi, \alpha, \beta) \mapsto \left(\begin{array}{c} \om_0(\alpha, \psi) \cdot \om_0(\phi, \beta) \\ \vdots \\ \om_{p+q}(\alpha, \psi) \cdot \om_{p+q}(\phi, \beta) \end{array}\right). \label{om*def}
\end{align}
These maps are clearly linear in each variable so they induce (unique) linear maps from $\mathcal{S} \otimes \mathcal{S} \otimes \mathcal{S} \otimes \mathcal{S} \to \R^{p+q+1}$.  If we let $M$ be the $p+q+1 \times p+q+1$ matrix whose $j+1, k+1$ component is 
\begin{equation}
\eps_{k+j} \sum_{m=0}^{\min\{j,k\}} (-1)^m {k \choose m} {p+q-k \choose j-m} \label{Mdef}
\end{equation}  
then $(3)$ tells us that
\begin{equation*}
\Omega^*(\phi \otimes \psi \otimes \alpha \otimes \beta) = M \Omega(\phi \otimes \psi \otimes \alpha \otimes \beta).
\end{equation*}
Furthermore, from the definitions of $\Omega$ and $\Omega^*$, we clearly have that $M^2 \Omega(\phi \otimes \psi \otimes \alpha \otimes \beta) = \Omega (\phi \otimes \psi \otimes \alpha \otimes \beta)$.  Thus if we can show that the map $\Omega : \mathcal{S} \otimes \mathcal{S} \otimes \mathcal{S} \otimes \mathcal{S} \to \R^{p+q+1}$ is surjective, then it will follow that $M^2 = 1$.  This is indeed the case.  In appendix we give a proof of the corresponding statement for corner algebras.  The proofs for the two statements are similar, though it is a bit more difficult in the corner case since we cut off forms at $\frac{n-1}{2}$.  We are unable to prove directly from the definition that $M$ squares to the identity.  We note that $\Omega(\phi \otimes \psi \otimes \phi \otimes \psi) = \Omega^*(\phi \otimes \psi \otimes \phi \otimes \beta)$ so that $\Omega(\phi \otimes \psi \otimes \phi \otimes \beta)$ is always an eigenvector of $M$ with eigenvalue 1.  Some examples of these matrices, which we call \textit{cross symmetry matrices}, can be found in appendix \ref{csex}. \\

\noindent The matrix defined by (\ref{Mdef}) differs from a Krawtchouk matrix by only a factor of $\eps_{k+j}$ and both types of matrices share the fundamental property that they square to a multiple of the identity \cite{fein1}\cite{fein2}.

\subsection{Type $\mathbb{H}$ Subordinate Algebras}
\noindent Let us now turn to the case of type $\mathbb{H}$ subordinate algebras., which is very similar to the previous case.  Recall that the factorization of $T_\pm$ is determined by certain commutation relations, e.g. (\ref{comrelT+}).  The only difference from the type $\R$ case is that now the set of all elements which commute with $\Cpq$ is spanned by $\{1, I, J, K \}$ and the set of all elements that anti-commute with all vectors is spanned by $\{ \gamma, I \gamma, J \gamma, K \gamma \}$.  \\

\noindent For concreteness we will focus on $T_+$ with $\Gamma = \gamma_p$ and $p$ even.  Based on table \ref{T+table} we have that
\begin{equation}
T_+(\phi, \psi, \alpha, \beta) = \sum_{q_1, q_2 \in \{1, I, J, K \}} C_{q_1, q_2} (\phi, q_1 \alpha)_-(\psi, q_2 \beta)_-,\label{T+exp}
\end{equation}
where $C_{q_1,q_2} \in \R$.  From (\ref{sym2}) we get that $C_{q_1,q_2} = C_{q_2,q_1}$.  It follows from (\ref{conjsym}) and the definition of $T_+$ that
\begin{equation*}
T_+(I \phi, I \psi, \alpha, \beta) = T_+(J \phi, J \psi, \alpha, \beta) = T_+(K \phi, K \psi, \alpha, \beta) = T_+(\phi, \psi, \alpha, \beta).
\end{equation*}
Putting the identity $T_+(I\phi, I\psi, \alpha, \beta) = T_+(\phi, \psi, \alpha, \beta)$ into (\ref{T+exp}) and using the conjugate symmetry of $(\cdot, \cdot)_-$ gives
\begin{equation*}
\sum_{q_1, q_2 \in \{ 1, I, J, K \}} C_{q_1, q_2}(\phi, I q_1 \alpha)_- (\psi, I q_2 \beta)_- = \sum_{q_1, q_2 \in \{ 1, I, J, K \}} C_{q_1, q_2}(\phi, q_1 \alpha)_- (\psi, q_2 \beta)_-.
\end{equation*}
Equating the coefficients on $(\phi, J \alpha)_- (\psi, K \beta)_-$ shows that $-C_{J, K} = C_{K,J}$.  But since $C_{J,K} = C_{K,J}$, this means that $C_{J,K} = 0 = C_{K,J}$ and, by permuting $I, J,$ and $K$, we get that $0 = C_{I, J} =  C_{J, I} = C_{I, K} = C_{K,I}$.  Equating the coefficients on $(\phi, I \alpha)_- (\psi, J \beta)_-$ shows that $-C_{1, K} = C_{I, J} = 0$.  Finally, equating the coefficients on $(\phi, \alpha)_- (\psi, \beta)_-$ shows that $C_{I,I} = C_{1,1}$ and, by symmetry, $C_{J, J} = C_{1,1} = C_{K,K}$.  Thus (\ref{T+exp}) simplifies to
\begin{equation*}
T_+(\phi, \psi, \alpha, \beta) = C \sum_{q \in \{1, I, J, K \}} (\phi, q \alpha)_-(\psi, q \beta)_-.
\end{equation*}
The same technique used in the previous section determines $C$ give factorizations in the following tables.
\begin{table}[!h]
\begin{center}
\begin{tabular}{|c|c|c|}
\hline
\backslashbox[0.5cm]{$~~~~p$}{$\Gamma$} & $\gamma_p$ & $\gamma_q$ \\ \hline
even & $2^{(p+q-2)/2} \displaystyle\sum_{q \in \{1, I, J, K\}} (\phi, q \alpha)_- (\psi, q \beta)_- $ & $2^{(p+q-2)/2} \displaystyle\sum_{q \in \{1, I, J, K\}} (\phi, q \alpha)_- (\psi, q \beta)_-$ \\ \hline
odd & $2^{(p+q-2)/2} \displaystyle\sum_{q \in \{1, I, J, K\}} (\phi, q \alpha)_+ (\psi, q \beta)_+$ & $-2^{(p+q-2)/2} \displaystyle\sum_{q \in \{1, I, J, K\}} (\phi, q \alpha)_+ (\psi, q \beta)_+$ \\ \hline
\end{tabular}
\caption{Factorization of $T_+(\phi,\psi,\alpha,\beta)$ for type $\mathbb{H}$ subordinate algebras.} \label{HT+table}
\end{center}
\end{table}

\begin{table}[!h]
\begin{center}
\begin{tabular}{|c|c|c|}
\hline
\backslashbox[0.5cm]{$~~~~p$}{$\Gamma$} & $\gamma_p$ & $\gamma_q$ \\ \hline
even & $2^{(p+q-2)/2} \displaystyle\sum_{q \in \{1, I, J, K\}} (\phi, q \alpha)_+ (\psi, q \beta)_+$ & $2^{(p+q-2)/2} \displaystyle\sum_{q \in \{1, I, J, K\}} (\phi, q \alpha)_+ (\psi, q \beta)_+$ \\ \hline
odd & $-2^{(p+q-2)/2} \displaystyle\sum_{q \in \{1, I, J, K\}} (\phi, q \alpha)_- (\psi, q \beta)_-$ & $2^{(p+q-2)/2} \displaystyle\sum_{q \in \{1, I, J, K\}} (\phi, q \alpha)_- (\psi, q \beta)_-$ \\ \hline
\end{tabular}
\caption{Factorization of $T_-(\phi,\psi,\alpha,\beta)$ for type $\mathbb{H}$ subordinate algebras.} \label{HT-table}
\end{center}
\end{table}

\subsubsection{Cross-Symmetry}
Using the same procedure used in section (\ref{Rsubcs}), we define maps
\begin{align}
\Omega &: (\phi, \psi, \alpha, \beta) \mapsto \left(\begin{array}{c} \om_0(\phi, \psi) \cdot \om_0(\alpha, \beta) \\ \vdots \\ \om_{p+q}(\phi, \psi) \cdot \om_{p+q}(\alpha, \beta) \end{array}\right)\label{omdef} \\
\Omega^* &: (\phi, \psi, \alpha, \beta) \mapsto \left(\begin{array}{c} \displaystyle \sum_{q \in \{1, I, J, K\}} \om_0(\alpha, q \psi) \cdot \om_0(\phi, q \beta) \\ \vdots \\ \displaystyle \sum_{q \in \{1, I, J, K\}} \om_{p+q}(\alpha, q \psi) \cdot \om_{p+q}(\phi, q \beta) \end{array}\right). \label{om*def}
\end{align}
We have that
\begin{equation*}
\Omega^*(\phi, \psi, \alpha, \beta) = M \Omega(\phi, \psi, \alpha, \beta)
\end{equation*}
where $M$ is now the matrix whose $j+1, k+1$ component is
\begin{equation*}
2 \eps_{k+j} \sum_{m=0}^{\min\{j,k\}} (-1)^m {k \choose m} {p+q-k \choose j-m},
\end{equation*}
where $\eps_k$ is the same as before.  Note that this has an extra factor of 2 when compared with (\ref{Mdef}).  This is because the factorization of $T$ in a type $\mathbb{H}$ algebra carries a factor of $2^{(p+q-2)/2}$ whereas the factor is $2^{(p+q)/2}$ in a type $\R$ algebra.  Evidently, besides this factor of two, the cross-symmetry matrices for a type $\mathbb{H}$ algebra and a type $\R$ algebra of the same dimension are the same.  Therefore, we now have that $M^2 = 4\cdot1$.

\subsubsection{Example: (1,3)}
We will now show how we can use $T_\pm$ to derive specialized Fierz identities.  Because subordinate algebras have two bilinear forms and both $T_\pm$ are non-zero (when we get to corner algebras we will see that one of $T_\pm$ always vanishes), there are many identities.  We will therefore only derive a few to give a feel for how the process works.  Many of the familiar Fierz identities for $Cl_{1,3}$ are derivable solely from the factorizations of $T_\pm$.  From table \ref{HT+table} we have
\begin{equation}
\sum_{k=0}^4 \om_k^+(\phi, \psi) \cdot \om_k^+(\alpha, \beta) = 2 \sum_{q\in\{1,I,J,K\}} (\phi, q \alpha)_+ (\psi, q \beta)_+. \label{T+gq}
\end{equation}
Specializing to $\phi = \alpha = \beta = \psi$ and noticing that from $\ref{vpp+}$ only the 0, 1, and 4 forms do not vanish gives
\begin{equation*}
(\psi, \psi)_+^2 + \om_1^+(\psi, \psi)^2 + \om_4^+(\psi, \psi)^2 = 2(\psi, \psi)_+^2.
\end{equation*}
Note that $(\psi, I \psi)_+ = (\psi, J \psi)_+ = (\psi, K \psi)_+ = 0$ since  $I \gamma_p$ is anti-symmetric (since $(I \gamma_p)^2 = 1$).  Noticing that $\om_4^+(\psi, \psi)^2 = g(\gamma, \gamma) (\psi, \gamma \psi)_+^2 = -(\psi, \psi)_-^2$ gives the familiar Fierz identitiy \cite{lounesto}
\begin{equation}
\om_1^+(\psi, \psi)^2 = (\psi, \psi)_+^2 + (\psi, \psi)_-^2. \label{id1}
\end{equation}
If we now let $P$ be a pure unit quaternion and make the replacements $\phi = \psi$, $\alpha = \psi$, and $\beta = \gamma P \psi$ in (\ref{T+gq}) then $\om_k^+(\alpha, \beta)$ becomes $\om_k^-(\psi, P \psi)$.  From (\ref{vpp+}) and (\ref{vppH}) the only non-vanishing term on the left side is $\om_1^+(\psi, \psi) \cdot \om_1^-(\psi, P \psi)$.  Similarly, the right side simplifies to $(\psi, \psi)_+(\psi, P \psi)_- = 0$.  We therefore get the identitiy
\begin{equation}
\om_1^+(\psi, \psi) \cdot \om_1^-(\psi, P \psi) = 0. \label{id2}
\end{equation}
\noindent From tables \ref{HT+table} and \ref{HT-table} we see that
\begin{align*}
\sum_{k=0}^4 \om_k^-(\phi, \psi) \cdot \om_k^-(\alpha, \beta) &= -2 \sum_{q\in\{1,I,J,K\}} (\phi, q \alpha)_+ (\psi, q \beta)_+ \\
\sum_{k=0}^4 (-1)^k \om_k^-(\phi, \psi) \cdot \om_k^-(\alpha, \beta) &= 2 \sum_{q\in\{1,I,J,K\}} (\phi, q \alpha)_- (\psi, q \beta)_-.
\end{align*}
Subtracting them gives
\begin{equation*}
\om_1^-(\phi, \psi) \cdot \om_1^-(\alpha, \beta) + \om_3^-(\phi, \psi) \cdot \om_3^-(\alpha, \beta) = - \sum_{q\in\{1,I,J,K\}} \left( (\phi, q \alpha)_+ (\psi, q \beta)_+ + (\phi, q \alpha)_- (\psi, q \beta)_- \right).
\end{equation*}
Letting $P$ be a pure unit quaternion, as before, and specializing to $\psi = P \phi$, $\alpha = \phi$, $\beta = P \phi$, we get
\begin{align*}
\om_1^-(\phi, P \phi)^2 + \om_3^-(\phi, P \phi)^2 &= - \sum_{q\in\{1,I,J,K\}} \left( (\phi, q \phi)_+ (P \phi, q P \phi)_+ + (\phi, q \phi)_- (P \phi, q P \phi)_- \right) \\
&= - \sum_{q\in\{1,I,J,K\}} \left( (\phi, q \phi)_+^2 + (\phi, q \phi)_-^2 \right).
\end{align*}
By (\ref{vpp+}) and (\ref{vpp-}), the only non-zero term on the left side is $\om_1(\phi, P \phi)^2$ and the only non-zero terms on the right side are $(\phi, \phi)_+^2$ and $(\phi, \phi)_-^2$.  Substituting (\ref{id1}) gives
\begin{equation}
\om_1^-(\phi, P \phi)^2 = - \om_1^+(\phi,\phi)^2. \label{id3}
\end{equation}
The identities (\ref{id2}) and (\ref{id3}) are known In the Dirac treatment of $Cl_{1,3}$ but with $P$ replaced by $i$. \\

\noindent We will derive one more identity.  From tables \ref{HT+table} and \ref{HT-table} we have
\begin{align*}
\sum_{k=0}^4 \om_k^+(\phi,\psi) \cdot \om_k^+(\alpha, \beta) = 2 \sum_{q \in \{1,I,J,K\}} (\phi, q \alpha)_+ (\psi, q \beta)_+ \\
\sum_{k=0}^4 (-1)^k \om_k^+(\phi,\psi) \cdot \om_k^+(\alpha, \beta) = - 2 \sum_{q \in \{1,I,J,K\}} (\phi, q \alpha)_- (\psi, q \beta)_-.
\end{align*}
Letting $P$ be a pure unit quaternion, if we make the replacements $\psi \to P \phi, \beta \to P \phi$, and $\alpha \to \phi$, then by (\ref{vpp+}), (\ref{vppH}), and (\ref{conjsym}) adding these two equations yields
\begin{equation*}
\om_2^+(\phi, P \phi)^2 = (\phi, \phi)_+^2 - (\phi,\phi)_-^2.
\end{equation*}

%We can use cross-symmetry to derive even more one-spinor identities.  Specializing to $\alpha = \beta = \phi = \psi$, we get from (\ref{})
%\begin{equation*}
%\left( \begin{array}{c} \om_0^+(\psi, \psi)^2 \\ \om_1(\psi,\psi)^2 \\ \om_2^+(\psi, I \psi)^2 + \om_2^+(\psi, J \psi)^2 + \om_2^+(\psi, K \psi)^2 \\ \om_3^+(\psi, I \psi)^2 + \om_3^+(\psi, J \psi)^2 + \om_3^+(\psi, K \psi)^2 \\ \om_4^+(\psi,\psi)^2 \end{array} \right) = \frac{1}{2}  \left( \begin {array}{ccccc} 1&1&-1&-1&1\\\noalign{\medskip}4&-2&0&-2
%&-4\\\noalign{\medskip}-6&0&-2&0&-6\\\noalign{\medskip}-4&-2&0&-2&4
%\\\noalign{\medskip}1&-1&-1&1&1\end {array} \right)
%\end{equation*}

\subsection{Type $\R$ Corner Algebras}
\noindent In a corner algebra $\gamma = \pm 1$, which means that elements in the representation of $\Cpq$ are self-dual, i.e. $e_I = \pm e_{I^c}$ (where $I^c$ is the complement of $I$).  In particular, this means that $\gamma_p = \pm \gamma_q$ so that there is really only one bilinear form, which we denote by $(\cdot, \cdot)$.  We define $\om_k$ to be $\om_k^+ = \om_k^-$.  If $q$ is even then $g(e_I, e_I) = g(e_{I^c}, e_{I^c})$.  Since $|I|$ and $|I^c|$ have different parities (since $p+q$ is always odd in corner algebras), this means that $T_-$ must vanish.  On the other hand, if $q$ is odd then $g(e_I e_I) = - g(e_{I^c}, e_{I^c})$ so that $T_+$ vanishes.  That one of these tensors must vanish is consistent with our results in the previous section since in the subordinate algebra case we saw that the factorization of one of $T_\pm$ always relied on the existence of an element that anti-commutes with $\Rpq$.  There is no such element in a corner algebra since $\gamma = \pm 1$.  Analogous arguments as those used in the previous section give
\begin{equation*}
T_+(\phi, \psi, \alpha, \beta) = \begin{cases}
2^{(p+q-1)/2} (\phi, \alpha) (\psi, \beta) &\text{ if $q$ is even} \\
0 &\text{ if $q$ is odd}
\end{cases}
\end{equation*}
\begin{equation*}
T_-(\phi, \psi, \alpha, \beta) = \begin{cases}
0 &\text{ if $q$ is even} \\
2^{(p+q-1)/2} (\phi, \alpha) (\psi, \beta) &\text{ if $q$ is odd}.
\end{cases}
\end{equation*}
When working in a corner algebra, we will sometimes write $T$ for whichever of $T_\pm$ is non-zero.  We can write $T$ for a general corner algebra as
\begin{align}
T(\phi,\psi,\alpha,\beta) &= \sum_{|I| \le \frac{p+q-1}{2}} (-1)^{q|I|} g(e_I, e_I) (\phi, e_I \psi) (\alpha, e_I \beta) \nonumber \\
&= \sum_{k=0}^{\frac{p+q-1}{2}} (-1)^{qk} \om_k(\phi, \psi) \cdot \om_k(\alpha, \beta). \label{Tom}
\end{align}

\subsubsection{Cross-Symmetry}
The calculations from section \ref{Rsubcs} carry over to the type $\R$ corner algebra case with the exception that now we cut forms off at $\frac{p+q-1}{2}$.  We thus define maps
\begin{align}
\Omega &: (\phi, \psi, \alpha, \beta) \mapsto \left(\begin{array}{c} \om_0(\phi, \psi) \cdot \om_0(\alpha, \beta) \\ \vdots \\ \om_{p+q}(\phi, \psi) \cdot \om_{p+q}(\alpha, \beta) \end{array}\right)\label{omdef} \\
\Omega^* &: (\phi, \psi, \alpha, \beta) \mapsto \left(\begin{array}{c} \om_0(\alpha, \psi) \cdot \om_0(\phi, \beta) \\ \vdots \\ \om_{\frac{p+q-1}{2}}(\alpha, \psi) \cdot \om_{\frac{p+q-1}{2}}(\phi, \beta) \end{array}\right). \label{om*def}
\end{align}
We have that
\begin{equation*}
\Omega^*(\phi, \psi, \alpha, \beta) = M \Omega(\phi, \psi, \alpha, \beta)
\end{equation*}
where $M$ is now the matrix whose $j+1, k+1$ component is
\begin{equation*}
\frac{1}{2^{(p+q-1)/2}} (-1)^{\frac{1}{2}(k+j)(k+j-1)} \sum_{m=0}^{\min\{j,k\}} (-1)^m {k \choose m} {p+q-k \choose j-m}.
\end{equation*}
As in the type $\R$ subordinate case, $M$ squares to the identity.  Since forms get cut off at $\frac{p+q-1}{2}$, the cross-symmetry matrices are less symmetrical than those for subordinate algebras, as seen in appendix \ref{csex}.

\subsubsection{Example: (10, 1)}
While the cross-symmetry matrices do not depend much on the signature, Fierz identities involving fewer than four spinors do.  Therefore it is best to illustrate with an example.  We do this with the signature used in M-theory: (10,1).  The cross symmetry relation in (10,1) is 
\begin{equation}
32 \left(\begin{array}{c} \om_0(\alpha, \psi) \cdot \om_0(\phi, \beta) \\ \om_1(\alpha, \psi) \cdot \om_1(\phi, \beta) \\ \om_2(\alpha, \psi) \cdot \om_2(\phi, \beta) \\ \om_3(\alpha, \psi) \cdot \om_3(\phi, \beta) \\ \om_4(\alpha, \psi) \cdot \om_4(\phi, \beta) \\ \om_5(\alpha, \psi) \cdot \om_5(\phi, \beta) \end{array}\right) =  \left( \begin {array}{cccccc} 1&1&-1&-1&1&1\\\noalign{\medskip}11&-9&
-7&5&3&-1\\\noalign{\medskip}-55&-35&19&7&1&5\\\noalign{\medskip}-165&
75&21&5&11&-5\\\noalign{\medskip}330&90&6&22&-6&10\\\noalign{\medskip}
462&-42&42&-14&14&-10\end {array}
 \right) \left(\begin{array}{c} \om_0(\phi, \psi) \cdot \om_0(\alpha, \beta) \\  \om_1(\phi, \psi) \cdot \om_1(\alpha, \beta) \\  \om_2(\phi, \psi) \cdot \om_2(\alpha, \beta) \\  \om_3(\phi, \psi) \cdot \om_3(\alpha, \beta) \\  \om_4(\phi, \psi) \cdot \om_4(\alpha, \beta) \\ \om_5(\phi, \psi) \cdot \om_5(\alpha, \beta) \end{array}\right) \label{101basic}
\end{equation}
Specializing to $\alpha = \phi$ makes the two vectors equal.  Subtracting the left vector from both sides then gives us
\begin{equation*}
\left( \begin {array}{cccccc} 
-31 &1&-1&-1&1&1\\\noalign{\medskip}
11&-41&-7&5&3&-1\\\noalign{\medskip}
-55&-35&-13&7&1&5\\\noalign{\medskip}
-165&75&21&-27&11&-5\\\noalign{\medskip}
330&90&6&22&-38&10\\\noalign{\medskip}
462&-42&42&-14&14&-42\end {array}
 \right) \left(\begin{array}{c} \om_0(\phi, \psi) \cdot \om_0(\phi, \beta) \\  \om_1(\phi, \psi) \cdot \om_1(\phi, \beta) \\  \om_2(\phi, \psi) \cdot \om_2(\phi, \beta) \\  \om_3(\phi, \psi) \cdot \om_3(\phi, \beta) \\  \om_4(\phi, \psi) \cdot \om_4(\phi, \beta) \\ \om_5(\phi, \psi) \cdot \om_5(\phi, \beta) \end{array}\right) = 0.
\end{equation*}
The above matrix has rank three.  Reducing it gives these three linearly independent identities
\begin{equation*}
\left( \begin {array}{cccccc}
15 & 0 & 0 & 1 & -1 & 0 \\
0 & 30 & 0 & 2 & -7 & 5 \\
0 & 0 & 6 & -6 & 5 & -5 \\
\end{array} \right) 
\left(\begin{array}{c} \om_0(\phi, \psi) \cdot \om_0(\phi, \beta) \\  \om_1(\phi, \psi) \cdot \om_1(\phi, \beta) \\  \om_2(\phi, \psi) \cdot \om_2(\phi, \beta) \\  \om_3(\phi, \psi) \cdot \om_3(\phi, \beta) \\  \om_4(\phi, \psi) \cdot \om_4(\phi, \beta) \\ \om_5(\phi, \psi) \cdot \om_5(\phi, \beta) \end{array}\right) = 0.
\end{equation*}
If we specialize even further to $\phi = \psi$ and $\beta = \psi$ then we get
\begin{align*}
30 \om_1(\psi, \psi) \cdot \om_1(\psi, \psi) + 5 \om_5(\psi, \psi) \cdot \om_5 (\psi, \psi) = 0 \\
6 \om_2(\psi, \psi) \cdot \om_1(\psi, \psi) - 5 \om_5(\psi, \psi) \cdot \om_5 (\psi, \psi) = 0. 
\end{align*}
Putting $\phi = \psi$ and $\beta = \alpha$ in (\ref{101basic}) and noticing from (\ref{vpp-}) that only $\om_k(\psi, \psi) \ne 0$ for $k = 1, 2, 5$, we get
\begin{equation*}
32 \left(\begin{array}{c} \om_0(\alpha, \psi) \cdot \om_0(\psi, \alpha) \\ \om_1(\alpha, \psi) \cdot \om_1(\psi, \alpha) \\ \om_2(\alpha, \psi) \cdot \om_2(\psi, \alpha) \\ \om_3(\alpha, \psi) \cdot \om_3(\psi, \alpha) \\ \om_4(\alpha, \psi) \cdot \om_4(\psi, \alpha) \\ \om_5(\alpha, \psi) \cdot \om_5(\psi, \alpha) \end{array}\right) = 
\left( \begin {array}{ccc}
1 & -1 & 1 \\
-9 & -7 & -1 \\
-35 & 19 & 5 \\
75 & 21 & -5 \\
90 & 6 & 10 \\
-42 & 42 & -10 \\
\end{array} \right) 
\left(\begin{array}{c} \om_1(\psi, \psi) \cdot \om_1(\alpha, \alpha) \\  \om_2(\psi, \psi) \cdot \om_2(\alpha, \alpha) \\  \om_5(\psi, \psi) \cdot \om_5(\alpha, \alpha) \end{array}\right). 
\end{equation*}
We can use (\ref{om-sym}) to write $\om_k(\alpha, \psi)$ in terms of $\om_k(\psi, \alpha)$.  Putting $\om_k(\psi,\alpha)^2$ for $\om_k(\psi, \alpha) \cdot \om_k(\psi, \alpha)$ and absorbing into the matrix any minus sign that may result from going from $\om_k(\alpha, \psi)$ to $\om_k(\psi, \alpha)$ gives
\begin{equation*}
32 \left(\begin{array}{c} \om_0(\psi, \alpha)^2 \\ \om_1(\psi, \alpha)^2 \\ \om_2(\psi, \alpha)^2 \\ \om_3(\psi, \alpha)^2 \\ \om_4(\psi, \alpha)^2 \\ \om_5(\psi, \alpha)^2 \end{array}\right) = 
\left( \begin {array}{ccc}
-1 & 1 & -1 \\
-9 & -7 & -1 \\
-35 & 19 & 5 \\
-75 & -21 & 5 \\
-90 & -6 & -10 \\
-42 & 42 & -10 \\
\end{array} \right) 
\left(\begin{array}{c} \om_1(\psi, \psi) \cdot \om_1(\alpha, \alpha) \\  \om_2(\psi, \psi) \cdot \om_2(\alpha, \alpha) \\  \om_5(\psi, \psi) \cdot \om_5(\alpha, \alpha) \end{array}\right). 
\end{equation*}
%Specializing even further to $\alpha = \psi$ gives
%\begin{equation*}
%32 \left(\begin{array}{c} 0 \\ \om_1(\psi, \psi)^2 \\ \om_2(\psi, \psi)^2 \\ 0 \\ 0 \\ \om_5(\psi, \psi)^2 \end{array}\right) = 
%\left( \begin {array}{ccc}
%-1 & 1 & -1 \\
%-9 & -7 & -1 \\
%-35 & 19 & 5 \\
%-75 & -21 & 5 \\
%-90 & -6 & -10 \\
%-42 & 42 & -10 \\
%\end{array} \right) 
%\left(\begin{array}{c} \om_1(\psi, \psi)^2 \\  \om_2(\psi, \psi)^2 \\  \om_5(\psi, \psi)^2 \end{array}\right). 
%\end{equation*}
%which reduces to the following three linearly independent identities

\subsection{Type $\mathbb{H}$ Corner Algebras}
Mimicking what was done in the type $\R$ corner algebra case, we see that for a type $\mathbb{H}$ corner algebra, one of $T_\pm$ also vanishes.  Letting $T$ be the non-zero one, we can write
\begin{equation*}
T(\phi,\psi,\alpha,\beta) = \sum_{k=0}^{\frac{p+q-3}{2}} (-1)^{qk} \om_k(\phi, \psi) \cdot \om_k(\alpha, \beta).
\end{equation*}
Its factorization is
\begin{equation*}
T(\phi, \psi, \alpha, \beta) = 2^{(p+q-3)/2} \sum_{q \in \{1, I, J, K\}} (\phi, q \alpha) (\psi, q \beta).
\end{equation*}

\subsubsection{Cross-Symmetry}
As in the subordinate case, the type $\mathbb{H}$ corner algebras are similar to the type $\R$ corner algebras.  Since we still cut forms off at $\frac{p+q-1}{2}$,  we define maps
\begin{align}
\Omega &: (\phi, \psi, \alpha, \beta) \mapsto \left(\begin{array}{c} \om_0(\phi, \psi) \cdot \om_0(\alpha, \beta) \\ \vdots \\ \om_{p+q}(\phi, \psi) \cdot \om_{p+q}(\alpha, \beta) \end{array}\right)\label{omdef} \\
\Omega^* &: (\phi, \psi, \alpha, \beta) \mapsto \left(\begin{array}{c} \sum_{q \in \{1,I,J,K\}}\om_0(\alpha, q \psi) \cdot \om_0(\phi, q \beta) \\ \vdots \\ \sum_{q \in \{1,I,J,K\}} \om_{\frac{p+q-1}{2}}(\alpha, q \psi) \cdot \om_{\frac{p+q-1}{2}}(\phi, q \beta) \end{array}\right). \label{om*def}
\end{align}
We have that
\begin{equation*}
\Omega^*(\phi, \psi, \alpha, \beta) = M \Omega(\phi, \psi, \alpha, \beta)
\end{equation*}
where $M$ is the matrix whose $j+1, k+1$ component is
\begin{equation*}
\frac{1}{2^{(p+q-3)/2}} (-1)^{\frac{1}{2}k(k-1)} \sum_{m=0}^{\min\{j,k\}} (-1)^m {k \choose m} {p+q-k \choose j-m}.
\end{equation*}
We see that $M$ is two times the cross-symmetry matrix for a type $\R$ corner algebra of the same dimension.  Therefore, like in the type $\mathbb{H}$ subordinate case, $M$ squares to four times the identity.

\section{Conclusion}
We have showed the existence of a class of Fierz identities for real representations.  Along the way we have proved the existence of an entire class of involutory real matrices for any dimension.  There are still issues that we would like to explore in future work.  These include
\begin{itemize}
\item A better way to look at real representations of type $\C$ algebras to get similar identities.
\item Looking for a similar construction with complex and/or quaternionic representations of Clifford algebras.
\item Seeing if these involutory matrices have any deep significance and exploring their relationship to Krawtchouk matrices (which show up in a variety of places \cite{fein1}).
\item The relationship between the constructions in this paper with the two-spinor calculus of Penrose, Rindler, and Newman \cite{penroserindler}, \cite{penrosenewman}.
\item If the similarities between $\R$ and $\mathbb{H}$ in these constructions are manifestation of a deeper duality between $\R$ and $\mathbb{H}$.
\end{itemize}
Addressing the second point, the reason we cannot simply repeat this construction for complex and quaternionic representations is that now the condition (\ref{tis-1}) is not the same as $e_I^t = e_I^{-1}$ (as it is for real representations).  This, in turn, gives different symmetries in $T_\pm$ so that $T_\pm$ will not necessarily factor as $\gamma_{p(q)} \otimes \gamma_{p(q)}$.

\appendix
\appendixpage
\section{Proof that $M$ is surjective}
We will prove this by induction, the inductive step being
\begin{thm}
If the map $\Omega$ is surjective for $(p,q)$ and $(8,0)$ spinors, then it is surjective for $(p+8,q)$ spinors.
\end{thm}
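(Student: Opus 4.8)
We sketch the approach; it rests on the eightfold periodicity of real Clifford algebras in the sharp form $\Cpq\otimes Cl_{8,0}\cong Cl_{p+8,q}$, realized explicitly on generators. Let $E_1,\dots,E_8$ be the standard generators of $Cl_{8,0}$ and $\mu=E_1\cdots E_8$ its pseudoscalar; by (\ref{g2}) $\mu^2=1$, and since $8$ is even $\mu$ anticommutes with each $E_i$. If $e_1,\dots,e_{p+q}$ generate $\Cpq$, one checks that $E_1,\dots,E_8,\ \mu e_1,\dots,\mu e_{p+q}$ satisfy the defining relations for signature $(p+8,q)$ (the first eight square to $+1$; each $\mu e_j$ squares to $e_j^{\,2}$; all $p+q+8$ of them pairwise anticommute), so they generate $Cl_{p+8,q}$ and $\mathcal S_{p+8,q}=\mathcal S_{8,0}\otimes\mathcal S_{p,q}$, with the induced representation a valid one for $Cl_{p+8,q}$; note also $q-(p+8)-1\equiv q-p-1$ mod $8$, so $Cl_{p+8,q}$ has the same type and corner/subordinate status as $\Cpq$. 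Writing an index set $I\subseteq\{1,\dots,p+8+q\}$ as $A\sqcup B$ with $A\subseteq\{1,\dots,8\}$ and $B$ from the old generators, one obtains $e_I=\pm\,(E_A\mu^{|B|})\otimes e_B$, $\;g(e_I,e_I)=g(e_B,e_B)$, and (choosing the natural $\Gamma$ on $\mathcal S_{p+8,q}$, which factors as a tensor product of $\Gamma$-type forms on the two pieces up to a power of $\mu$) $\langle\phi_1\otimes\phi_2,\ e_I(\psi_1\otimes\psi_2)\rangle=\pm\,\langle\phi_1,E_A\mu^{|B|}\psi_1\rangle_{8,0}\,\langle\phi_2,e_B\psi_2\rangle_{p,q}$.

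Next I would restrict $\Omega_{p+8,q}$ to decomposable tensors $\phi=\phi_1\otimes\phi_2$, and likewise for $\psi,\alpha,\beta$. The sign in $e_I=\pm(\cdots)$ occurs in both pairings and so cancels; splitting the sum over $|I|=k$ according to $|A|=a$, $|B|=b$ with $a+b=k$, and using $g(e_I,e_I)=1$ for $I\subseteq\{1,\dots,8\}$, gives the convolution identity
\begin{equation*}
\big(\Omega_{p+8,q}(\phi\otimes\psi\otimes\alpha\otimes\beta)\big)_k \;=\; \sum_{a+b=k} u^{(b\bmod 2)}_a\, v_b ,
\end{equation*}
where $v_b$ is the $b$-th component of $\Omega_{p,q}(\phi_2\otimes\psi_2\otimes\alpha_2\otimes\beta_2)$ and $u^{(\eps)}_a=\sum_{|A|=a}\langle\phi_1,E_A\mu^{\eps}\psi_1\rangle\langle\alpha_1,E_A\mu^{\eps}\beta_1\rangle$. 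Here $u^{(0)}=\Omega_{8,0}(\phi_1\otimes\psi_1\otimes\alpha_1\otimes\beta_1)\in\R^{9}$, while $E_A\mu=\pm E_{A^c}$ (and $g\equiv1$ in the $(8,0)$ factor) gives $u^{(1)}_a=u^{(0)}_{8-a}$, i.e.\ $u^{(1)}$ is the coordinate reversal of $u^{(0)}$. Hence, by the hypothesis that $\Omega_{8,0}$ is surjective, each of $u^{(0)},u^{(1)}$ ranges over all of $\R^{9}$ as the first-factor spinors vary, independently of the choice of second-factor spinors.

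To conclude in the subordinate case: given $b_0\in\{0,\dots,p+q\}$, use surjectivity of $\Omega_{p,q}$ to choose second-factor spinors with $v$ equal to the $(b_0{+}1)$-st standard basis vector; then, varying the first-factor spinors, the identity above shows the image of $\Omega_{p+8,q}$ contains every vector supported on the block of coordinates $\{b_0,b_0+1,\dots,b_0+8\}$. Letting $b_0$ run over $0,\dots,p+q$, these blocks cover all coordinates $0,\dots,(p+8)+q$, so the image is all of $\R^{(p+8)+q+1}$ and $\Omega_{p+8,q}$ is surjective.

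The main obstacle is the corner case (the one the appendix needs): there $\Omega$ is cut off at degree $\tfrac{p+8+q-1}{2}$, so the upper part of a block overshoots, and by self-duality $e_I=\pm e_{I^c}$ in $Cl_{p+8,q}$ the overshooting terms fold back and are added, with signs, onto the component of degree $(p+8+q)-k$. One must verify that the resulting signed, folded family — over all $b_0\le\tfrac{p+q-1}{2}$ and all $u\in\R^{9}$ — still spans $\R^{(p+8+q+1)/2}$. This should follow by a triangular argument: for $b_0=\tfrac{p+q-1}{2}$ the folded map $\R^{9}\to\R^{5}$ onto the top five degrees is already surjective (it has the shape $u\mapsto(u_0,\ u_1\pm u_8,\ u_2\pm u_7,\ u_3\pm u_6,\ u_4\pm u_5)$), and successively smaller $b_0$ fill in the lower degrees, with no folding at all once $b_0\le\tfrac{p+q-9}{2}$. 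Pinning down the signs in this folding and organizing the triangulation is the one place where real care, rather than bookkeeping that cancels, is needed.
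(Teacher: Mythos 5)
Your decomposition is essentially the paper's: you realize $Cl_{p+8,q}$ via $E_1,\dots,E_8$ and $\mu e_1,\dots,\mu e_{p+q}$ (the paper writes $\{e_i\otimes 1,\ \gamma\otimes f_j\}$, same construction), factor the spinor space and the bilinear form across the tensor product, and arrive at the same convolution formula. Your $u^{(b\bmod 2)}_a$ with $u^{(1)}_a=u^{(0)}_{8-a}$ is exactly the paper's $\sigma_k(u)=k-u$ (for $u$ even) or $8-(k-u)$ (for $u$ odd). The subordinate case is handled cleanly by your block argument and is not in question.

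The gap is the corner case, which is precisely the case the appendix needs (it is used to justify $M^2=1$ in the main text after cutting forms off at $\frac{p+q-1}{2}$). You correctly identify the folding issue and propose a triangular block-filling argument, noting that for $b_0=\frac{p+q-1}{2}$ the folded map has the shape $u\mapsto(u_0,\ u_1\pm u_8,\ u_2\pm u_7,\ u_3\pm u_6,\ u_4\pm u_5)$; I checked this and it is right (the parity flip between $b_0$ and $p+q-b_0$ sends $u^{(0)}$ to $u^{(1)}$, which reverses the index). But you explicitly stop short: the signs in the folding and the general-$b_0$ step of the triangulation are left unverified, and the standard-basis-vector trick needs reexamining once a single choice of $v$ excites two positions ($b_0$ and its dual $p+q-b_0$). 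So the proof is incomplete where it matters.

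What the paper does instead is cleaner and removes the need for any triangulation. It proves a sharper combinatorial lemma: in the expansion of $\Omega_{p+8,q}$ into elementary monomials $\omega_j(\psi_1,\psi_2)\,\omega_j(\psi_3,\psi_4)\,\omega_i(\phi_1,\phi_2)\,\omega_i(\phi_3,\phi_4)$ with $j\le 8$ and $i\le\frac{p+q-1}{2}$, each monomial contributes to exactly one coordinate. The argument is an incompatibility of inequalities: the two candidate coordinates are $i+j$ and $(p+q)-i+8-j$ (or their odd-parity analogues), and the constraint that both lie at or below the cutoff $\frac{p+q+7}{2}$ is contradictory, since one implies $i+j\ge\frac{p+q+9}{2}$ while the other implies $i+j\le\frac{p+q+7}{2}$. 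With uniqueness of the contributing coordinate in hand, surjectivity follows immediately by choosing, via the inductive hypothesis, preimages of the appropriate standard basis vectors in $\R^9$ and $\R^{(p+q+1)/2}$ and tensoring them. That one observation is the missing idea; if you supply it, your write-up essentially becomes the paper's proof.
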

\begin{proof}
Let $\{ f_i \}$ and $\{ e_i \}$ be standard generators for $\Cpq$ and $Cl_{8,0}$, respectively.  Then the set $\{ e_i \otimes 1, \gamma \otimes f_j \}$ generates $Cl_{p+8,q}$, where $\gamma = \prod_{i=1}^8 e_i$.  Note that this is a generating set since $\gamma^2 = 1$ and $\gamma$ anti-commutes with each $e_i$.  Let $\{ \phi_i \}$ and $\{ \psi_j \}$ be bases for $\Cpq$ and $Cl_{8,0}$ spinors, respectively, so that $\{ \psi_j \otimes \phi_i \}$ is a basis for $Cl_{p+8,q}$ spinors.  The $\gamma_p$ in $Cl_{p+8,q}$ is equal to $\gamma \otimes \gamma_p$ (abusing notation so that this $\gamma_p$ is the $\gamma_p$ of $\Cpq$) if $p$ is even and $(1 \otimes \gamma_p)$ is $p$ is odd.  We will assume that $p$ is even (the $p$ odd case is analogous).  We will denote the inner product on $(p+8, q)$ spinors as $\langle \cdot, \cdot \rangle$: 
\begin{align*}
\langle \psi_1 \otimes \phi_1, \psi_2 \otimes \phi_2 \rangle &= (\psi_1^t \otimes \phi_1^t)(\gamma \otimes \gamma_p) (\psi_2 \otimes \phi_2) \\
&= (\psi_1, \psi_2)_+ (\phi_1, \phi_2), 
\end{align*}
where $(\psi_1, \psi_2)_+ = \psi_1^t \gamma \psi_2$.  \\

\noindent Consider now $\om_k(\psi_1 \otimes \phi_1, \psi_2 \otimes \phi_2)$.  The coefficient on 
\begin{equation}
(e_{i_1} \otimes 1) \ldots (e_{i_m} \otimes 1) (\gamma \otimes f_{j_1}) \ldots (\gamma \otimes f_{j_{k-m}}) \label{ext}
\end{equation}
is 
\begin{equation}
\langle \psi_1 \otimes \phi_1, (e_{i_1} \otimes 1) \ldots (e_{i_m} \otimes 1) (\gamma \otimes f_{j_1}) \ldots (\gamma \otimes f_{j_{k-m}}) \psi_2 \otimes \phi_2  \rangle. \label{coeff}
\end{equation}
If $k - m$ is even then all of the $\gamma$'s with the $f_{j_i}$'s cancel each other so that $\ref{ext}$ becomes
\begin{equation*}
(e_{i_1} \ldots e_{i_m}) \otimes (f_{j_1} \ldots f_{j_{k-m}})
\end{equation*}
and $\ref{coeff}$ becomes
\begin{gather*}
\langle \psi_1 \otimes \phi_1, (e_{i_1} \ldots e_{i_m}) \otimes (f_{j_1} \ldots f_{j_{k-m}}) \psi_2 \otimes \phi_2 \rangle \\
= (\psi_1, e_{i_1} \ldots e_{i_m} \psi_2)_+ (\phi_1, f_{j_1} \ldots f_{j_{k-m}} \phi_2).
\end{gather*}
On the other hand, if $k - m$ is odd then $\ref{ext}$ becomes
\begin{equation*}
(e_{i_1} \ldots e_{i_m} \gamma) \otimes (f_{j_1} \ldots f_{j_{k-m}})
\end{equation*}
and $\ref{coeff}$ becomes 
\begin{equation*}
(\psi_1, e_{i_1} \ldots e_{i_m} \gamma \psi_2)_+ (\phi_1, f_{j_1} \ldots f_{j_{k-m}} \phi_2).
\end{equation*}
Using the fact that, because of duality,
\begin{equation*}
\sum_{|I| = k} (\psi_1, e_I \gamma \psi_2) e_I \gamma = \sum_{|I| = 8-k} (\psi_1, e_I \psi_2) e_I
\end{equation*}
we therefore have that (up to some signs),
\begin{align*}
\om_k(\psi_1 \otimes &\phi_1, \psi_2 \otimes \phi_2) = \\
&\sum_{|I| = k}  (\psi_1, e_I \psi_2)_+ (\phi_1, \phi_2) e_I \otimes 1 + \sum_{|I| = k-1, |J| = 1}  (\psi_1, e_I \gamma \psi_2)_+ (\phi_1, f_J \phi_2) [(e_I \gamma) \otimes f_J] + \ldots \\
&= \om_k(\psi_1, \psi_2) \otimes \om_0(\phi_1, \phi_2) + \om_{8-(k-1)}(\psi_1, \psi_2) \otimes \om_1(\phi_1, \phi_2) \\
&+ \om_{k-2}(\psi_1, \psi_2) \otimes \om_{2}(\phi_1, \phi_2) + \om_{8 - (k-3)}(\psi_1, \psi_2) \otimes \om_3(\phi_1, \phi_2) + \ldots
\end{align*} 
where it is understood that $\om_k(\psi_1, \psi_2) = 0$ if $k > 8$ and $\om_m(\phi_1, \phi_2) = 0$ if $m > n$.  Note that if $m > \frac{n-1}{2}$, then $\om_m(\phi_1, \phi_2)$ is really, up to sign, $\om_{n-m}(\phi_1, \phi_2)$ (since in a corner algebra $e_1 \ldots e_{n} = \pm 1$ so we cut forms off at $\frac{n-1}{2}$ forms).  We have
\begin{align*}
\om_k(&\psi_1 \otimes \phi_1, \psi_2 \otimes \phi_2) \cdot \om_k(\psi_3 \otimes \phi_3, \psi_4 \otimes \phi_4) \\
&= \om_k(\psi_1, \psi_2) \cdot \om_k(\psi_3, \psi_4) \om_0(\phi_1, \phi_2) \cdot \om_0(\phi_3, \phi_4) \\
&+ \om_{9 - k}(\psi_1, \psi_2) \cdot \om_{9-k}(\psi_3, \psi_4) \om_1(\phi_1, \phi_2) \cdot \om_1(\phi_3, \phi_4) \\
&+ \om_{k-2}(\psi_1, \psi_2) \cdot \om_{k-2}(\psi_3, \psi_4) \om_2(\phi_1, \phi_2) \cdot \om_2(\phi_3, \phi_4) \\
&+ \ldots \\
&= \sum_{u} \om_{\sigma_k(u)}(\psi_1, \psi_2) \cdot \om_{\sigma_k(u)}(\psi_3, \psi_4) \om_u(\phi_1, \phi_2) \cdot \om_u(\phi_3, \phi_4)
\end{align*}
where
\begin{equation*}
\sigma_k(u) = \begin{cases}
k - u &\text{ if $u$ is even} \\
8 - (k -u) &\text{ if $u$ is odd}.
\end{cases}
\end{equation*}
We will now show that the map $\Omega$ is surjective from the four-fold tensor product of $(p+8,q)$ spinors, $\bigotimes_{i=1}^4 S$, onto $\R^\frac{n+9}{2}$.  To do this, we will show that for any $k$ between $0$ and $\frac{n+7}{2}$, we can always find an element of $A \in \bigotimes_{i=1}^4 S$ such that the $k$-form component of  $\Omega(A)$ is $1$ and all other components are $0$. \\

\noindent We will first show that a term of the form $\om_j(\psi_1, \psi_2) \cdot \om_j(\psi_3, \psi_4) \om_i(\phi_1, \phi_2) \cdot \om_i(\phi_3,\phi_4)$ shows up only once in $\Omega([\psi_1 \otimes \phi_1] \otimes [\psi_2 \otimes \phi_2] \otimes [\psi_3 \otimes \phi_3] \otimes [\psi_4 \otimes \phi_4])$.  Consider first the case that $i$ is even.  Then the term either shows up as part of a $i + j$-form or it could have come from $\Cpq$ duality in which case the $\om_i$ part should be thought of as $\om_{n-i}$.  In the latter case, $n-i$ must be odd since $n$ is odd and $i$ is even.  Therefore there were really $8-j$ $e_i's$ (since when there is an odd amount of $f_i$'s there is an extra $\gamma$) so that this is a term of a $n-i + 8-j$ form.  In the first case we must have that $i+j \le \frac{n+7}{2}$ and in the second we would have that $n - i + 8 - j \le \frac{n+7}{2}$.  However, these two inequalities are incompatible since the second one implies that $i+j \ge \frac{n+9}{2}$. \\

\noindent If $i$ is odd then, by similar reasoning, the two cases are that $i+8-j \le \frac{n+7}{2}$ or $n - i +j \le \frac{n+7}{2}$.  However, these are also incompatible since the first implies that $i - j \le \frac{n-9}{2}$ and the second implies that $i - j \ge \frac{n-7}{2}$. \\

\noindent Now let $k$ be some number between 0 and $\frac{n+7}{2}$ and let $i$ be such that $0 \le i \le \min\{k, n \}$.  By assumption, the $\Omega$ maps for $(p,q)$ and $(8,0)$ spinors are both 
surjective.  Therefore we can find $(p,q)$ spinors $\{ \phi_u^v\}$ and $(8,0)$ spinors $\{ \psi_r^s \}$ such that
\begin{equation*}
\sum_{\alpha} c_\alpha \om_m(\phi_1^\alpha, \phi_2^\alpha) \cdot \om_m(\phi_3^\alpha, \phi_4^\alpha) = \begin{cases}
1 &\text{ if $m = i$} \\
0 &\text{ otherwise}
\end{cases}
\end{equation*}
and
\begin{equation*}
\sum_{\beta} d_\beta \om_m(\psi_1^\beta, \phi_2^\beta) \cdot \om_m(\psi_3^\beta, \psi_4^\beta) = \begin{cases}
1 &\text{ if $m = \sigma_k(i)$} \\
0 &\text{ otherwise},
\end{cases}
\end{equation*}
where $c_\alpha, d_\beta \in \R$.  We then have that
\begin{align*}
\sum_{\alpha,\beta} &c_\alpha d_\beta \om_m(\psi_1^\beta \otimes \phi_1^\alpha, \psi_2^\beta \otimes \phi_2^\alpha) \cdot \om_m(\psi_1^\beta \otimes \phi_1^\alpha, \psi_2^\beta \otimes \phi_2^\alpha) \\
&= \sum_\alpha c_\alpha \sum_\beta d_\beta \om_m(\psi_1^\beta \otimes \phi_1^\alpha, \psi_2^\beta \otimes \phi_2^\alpha) \cdot \om_m(\psi_1^\beta \otimes \phi_1^\alpha, \psi_2^\beta \otimes \phi_2^\alpha) \\
&= \sum_\alpha c_\alpha \sum_\beta d_\beta \sum_u \om_{\sigma_m(u)}(\psi_1^\beta, \psi_2^\beta) \cdot \om_{\sigma_m(u)}(\psi_3^\beta, \psi_4^\beta) \om_u(\phi_1^\alpha, \phi_2^\alpha) \cdot \om_u(\phi_3^\alpha, \phi_4^\alpha) \\
&= \sum_u \sum_\alpha c_\alpha \om_u(\phi_1^\alpha, \phi_2^\alpha) \cdot \om_u(\phi_3^\alpha, \phi_4^\alpha) \sum_\beta d_\beta \om_{\sigma_m(u)}(\psi_1^\beta, \psi_2^\beta) \cdot \om_{\sigma_m(u)}(\psi_3^\beta, \psi_4^\beta) \\
&= \begin{cases}
1 &\text{ if at some point $u=i$ and $\sigma_m(u) = \sigma_k(u)$} \\
0 &\text{ otherwise}
\end{cases} \\
&= \begin{cases}
1 &\text{ if $m = k$ } \\
0 &\text{ otherwise},
\end{cases}
\end{align*}
the last inequality coming from the fact that, as shown above, a term of the form $\om_{\sigma_k(i)}(\psi_1, \psi_2) \cdot \om_{\sigma_k(i)}(\psi_3, \psi_4) \om_i(\phi_1, \phi_2) \cdot \om_i(\phi_3,\phi_4)$ only appears in the component of $\Omega([\psi_1 \otimes \phi_1] \otimes [\psi_2 \otimes \phi_2] \otimes [\psi_3 \otimes \phi_3] \otimes [\psi_4 \otimes \phi_4])$ that is associated with $k$-forms (i.e. the $k+1$ component).
\end{proof}
\noindent We have thus reduced showing that $\Omega$ is surjective (and therefore that $M$ is an involutory matrix) for any corner algebra to checking it for signature (8,0) and the corner signatures in dimensions less than 8: (1,0),(2,1),(3,2), (4,3), and (0,7).  The map $\Omega$ is indeed surjective in these cases, as we have verified computationally in Maple.

\section{A Direct Proof that $M^2 = 1$}
We will first tackle the case when $\Cpq$ is a subordinate algebra.  Letting $N = p+q$, what we want to show is that the $(N+1) \times (N+1)$ matrix $M$ whose $j+1, k+1$ component is
\begin{equation}
\frac{1}{2^{N/2}} (-1)^{\frac{1}{2}(k+j)(k+j-1)} \sum_{m=0}^{\min \{j, k\}} (-1)^m {k \choose m} {N - k \choose j - m} \label{defM}
\end{equation}
squares to the identity.  We have that the $j+1, l+1$ component of $M^2$ is
\begin{align}
\frac{1}{2^N} &\sum_{k,m,n} (-1)^{\frac{1}{2}((k+j)(k+j-1) + (k+l)(k+l-1))} (-1)^{m+n} {k \choose m} {N - k \choose j - m} {l \choose m} {N - l \choose k - n} \notag \\
&= \hspace{-3pt} \frac{(-1)^{\frac{1}{2}(j(j-1) + l(l-1))}}{2^N} \hspace{-6pt} \sum_{k,m,n} (-1)^{k(j+l) + m + n} {k \choose m} \hspace{-2pt} {N - k \choose j - m} \hspace{-2pt} {l \choose m} \hspace{-2pt} {N - l \choose k - n}.\label{M2}
\end{align}
We can allow $k, m$ and $n$ to run from 0 to $\infty$ since any term that has $k, m$ or $n$ outside of  its defined limit will vanish.  We will make use of the following consequence of the residue theorem:
\begin{equation*}
\{ \text{coefficient of $z^n$ in the expansion of $f(z)$ centered at 0} \} = \frac{1}{2 \pi i} \oint_C \frac{f(z)}{z^{n+1}} dz,
\end{equation*}
where $C$ is a closed curve around $C$ and $f$ has no singularities on or inside $C$.  \\

\noindent Momentarily ignoring the factor outside of the summation in (\ref{M2}), we have
\begin{align}
&\sum_{k,m,n} (-1)^{k(j+l) + m + n} {k \choose m} {N - k \choose j - m} {l \choose n} {N - l \choose k - n} \label{1} \\
&= \hspace{-3pt} \left(\frac{1}{2\pi i}\right)^2 \hspace{-5pt} \sum_{k,m,n} (-1)^{k(j+l)+m+n} \hspace{-1pt} {k \choose m} \hspace{-3pt} \oint_{C_1} \hspace{-6pt} \frac{(1+z)^{N-k}}{z^{j-m+1}} dz {l \choose n} \hspace{-3pt} \oint_{C_2} \hspace{-6pt} \frac{(1+w)^{N-l}}{w^{k-n+1}} dw \label{2} \\
&= \hspace{-3pt} \left(\frac{1}{2\pi i}\right)^2 \hspace{-2pt} \sum_k (-1)^{k(j+l)} \hspace{-3pt} \oint_{C_1} \hspace{-2pt} \frac{(1+z)^{N-k}}{z^{j+1}} \sum_m {k \choose m} (-1)^m z^m  dz \hspace{-2pt} \oint_{C_2} \hspace{-2pt} \frac{(1+w)^{N-l}}{w^{k+1}} \sum_n \hspace{-2pt}{l \choose n}(-1)^n w^n dw \nonumber  \\
&= \left(\frac{1}{2\pi i}\right)^2 \sum_k (-1)^{k(j+l)} \oint_{C_1} \frac{(1+z)^{N-k} (1-z)^k}{z^{j+1}} dz \oint_{C_2} \frac{(1+w)^{N-l} (1-w)^l}{w^{k+1}} dw \nonumber \\
&=\left(\frac{1}{2\pi i}\right)^2  \hspace{-3pt} \oint_{C_1} \hspace{-3pt} \oint_{C_2} \hspace{-8pt} \frac{(1+z)^N (1+w)^{N-l}(1-w)^l}{z^{j+1}w} \sum_{k=0}^\infty \left( \frac{(-1)^{j+l}(1-z)}{(1+z) w} \right)^k \hspace{-6pt} dw dz \label{5} \\
&=\left(\frac{1}{2\pi i}\right)^2  \oint_{C_1} \oint_{C_2} \frac{(1+z)^N (1+w)^{N-l}(1-w)^l}{z^{j+1}w} \frac{1}{1- \frac{(-1)^{j+l}(1-z)}{(1+z) w}} dw dz \label{6} \\
&= \left(\frac{1}{2\pi i}\right)^2 \oint_{C_1} \oint_{C_2} \frac{(1+z)^{N+1} (1+w)^{N-l}(1-w)^l}{z^{j+1}((1+z)w - (-1)^{j+l} (1-z))} dw dz \label{7} \\
&= \frac{1}{2 \pi i} \oint_{C_1} \frac{(1+z)^N}{z^{j+1}} \left( \frac{1}{2 \pi i} \oint_{C_2} \frac{(1+w)^{N-l}(1-w)^l}{w-(-1)^{j+l}\frac{1-z}{1+z}} dw \right) dz \label{8} \\
&= \frac{1}{2 \pi i} \oint_{C_1} \frac{(1+z)^N}{z^{j+1}} \left(1+(-1)^{j+l} \frac{1-z}{1+z}\right)^{N-l}\left(1-(-1)^{j+l} \frac{1-z}{1+z}\right)^l dz  \label{9} \\
&= \frac{1}{2 \pi i} \oint_{C_1} \frac{\left(1 + z + (-1)^{j+l}(1-z)\right)^{N-l} \left(1+z - (-1)^{j+l}(1-z)\right)^l}{z^{j+1}} \label{10} dz \\
&= \frac{1}{2 \pi i} \oint_{C_1} \begin{cases}
\frac{(2z)^{N-l} 2^l}{z^{j+1}} dz &\text{ if $j+l$ is odd} \\
\frac{2^{N-l} (2z)^l}{z^{j+1}} dz &\text{ if $j+l$ is even}
\end{cases} \label{11} \\
&= 2^N \frac{1}{2 \pi i} \oint_{C_1} \begin{cases}
\frac{1}{z^{j+l+1-N}} dz &\text{ if $j+l$ is odd} \\
\frac{1}{z^{j-l+1}} dz &\text{ if $j+l$ is even}
\end{cases}. \label{12}
\end{align}
In going from (\ref{1}) to (\ref{2}) we need both $C_1$ and $C_2$ to circle the origin, in going from (\ref{5}) to (\ref{6}) we need $\frac{|1-z|}{|1+z||w|} < 1$, and in going from (\ref{8}) to (\ref{9}) we need $C_2$ to circle $(-1)^{j+l} \frac{1-z}{1+z}$.  This can easily be achieved if $C_1$ and $C_2$ are circles about the origin with the radius of $C_1$ be sufficiently small and the radius of $C_2$ being sufficiently big ($C_1: |z| = \frac{1}{4}$ and $C_2: |w| = 2$ will do).  Since $j+l$ cannot be both odd and equal to $N$ (since in a subordinate algebra $N$ is even), we have that (\ref{12}) vanishes if $j+l$ is odd.  If $j + l$ is even then it can only be non-zero if $j = l$, in which case it is equal to $2^N$.  Putting this into (\ref{M2}) establishes that $M^2 = 1$. \\

\noindent In a corner algebra, we cannot let $k$ go to infinity since we must cut off forms at degree $\frac{N-1}{2}$.  This was crucial in (\ref{5}) since it enabled us to only get a simple pole in the $w$ integral.  However, we have that
\begin{align*}
&\sum_{k=0}^{(N-1)/2} \sum_{m=0}^{\min\{j, k\}} \sum_{n=0}^{\min\{k,l\}} (-1)^{k(j+l)+m+n} {k \choose m} {N - k \choose j - m} {l \choose n} {N-l \choose k - n} \\
&= \sum_{k=\frac{N-1}{2}}^N \sum_{m=0}^j \sum_{n=0}^l (-1)^{(N-k)(j+l)+m+n} {N - k \choose m}{k \choose j - m} {l \choose n} {N-l \choose N - k - n} \\
&= (-1)^{N(j+l)} \hspace{-8pt} \sum_{k=\frac{N-1}{2}}^N \sum_{m=0}^j \sum_{n=0}^l (-1)^{-k(j+l)+j-m+l-n} {N - k \choose j-m}{k \choose m}{l \choose l - n}{N-l \choose N - k - l + n} \\
&= (-1)^{N(j+l)+j+l} \sum_{k=\frac{N-1}{2}}^N \sum_{m=0}^j \sum_{n=0}^l (-1)^{k(j+l)+m+n} {k \choose m} {N - k \choose j - m} {l \choose n} {N-l \choose N - l -(k - n)} \\
&= (-1)^{(N+1)(j+l)} \sum_{k=\frac{N-1}{2}}^N \sum_{m=0}^j \sum_{n=0}^l (-1)^{k(j+l)+m+n} {k \choose m} {N - k \choose j - m} {l \choose n} {N-l \choose k - n} \\
&= \sum_{k=\frac{N-1}{2}}^N \sum_{m=0}^{\min\{j,k\}} \sum_{n=0}^{\min\{k,l\}} (-1)^{k(j+l)+m+n} {k \choose m} {N - k \choose j - m} {l \choose n} {N-l \choose k - n} \text{ (since $N$ is odd)}. 
\end{align*}
This means that
\begin{align*}
&\sum_{k=0}^{(N-1)/2} \sum_{m=0}^{\min\{j, k\}} \sum_{n=0}^{\min\{k,l\}} (-1)^{k(j+l)+m+n} {k \choose m} {N - k \choose j - m} {l \choose n} {N-l \choose k - n} \\
&= \frac{1}{2} \sum_{k=0}^N \sum_{m=0}^{\min\{j, k\}} \sum_{n=0}^{\min\{k,l\}} (-1)^{k(j+l)+m+n} {k \choose m} {N - k \choose j - m} {l \choose n} {N-l \choose k - n} \\
&= \begin{cases}
2^{N-1} &\text{ if $j = l$} \\
0 &\text{ otherwise},
\end{cases}
\end{align*}
which implies that $M^2 = 1$ in corner algebras (recall that $M$ has the same definition as in the subordinate case except that $2^{N/2}$ is replaced by $2^{(N-1)/2}$ in (\ref{defM})).

\section{Some Cross-Symmetry Matrices} \label{csex}
\subsection{For type $\R$ corner algebras}
\begin{longtable}{llll}
$p+q = 3:$ & $ \frac{1}{2} \left( \begin {array}{cc} 1&1\\\noalign{\medskip}3&-1\end {array} \right)$ &
$p+q = 5:$ & $ \frac{1}{4} \left( \begin {array}{ccc} 1&1&-1\\\noalign{\medskip}5&-3&-1
\\\noalign{\medskip}-10&-2&-2\end {array}
 \right) $ \\ \\

$p+q = 7:$ & $  \frac{1}{8} \left( \begin {array}{cccc} 1&1&-1&-1\\\noalign{\medskip}7&-5&-3&1
\\\noalign{\medskip}-21&-9&1&-3\\\noalign{\medskip}-35&5&-5&3
\end {array}
 \right) $ & $p+q = 9:$ & $ \frac{1}{16} \left( \begin {array}{ccccc} 1&1&-1&-1&1\\\noalign{\medskip}9&-7&-5&3
&1\\\noalign{\medskip}-36&-20&8&0&4\\\noalign{\medskip}-84&28&0&8&4
\\\noalign{\medskip}126&14&14&6&6\end {array}
 \right) $ \\ \\
 \end{longtable}
 
\begin{longtable}{cc} 
$p+q = 11:$ & $ \frac{1}{32} \left( \begin {array}{cccccc} 1&1&-1&-1&1&1\\\noalign{\medskip}11&-9&
-7&5&3&-1\\\noalign{\medskip}-55&-35&19&7&1&5\\\noalign{\medskip}-165&
75&21&5&11&-5\\\noalign{\medskip}330&90&6&22&-6&10\\\noalign{\medskip}
462&-42&42&-14&14&-10\end {array}
 \right) $ \\ \\
 
 $p+q = 13:$ & $ \frac{1}{64} \left( \begin {array}{ccccccc} 1&1&-1&-1&1&1&-1\\\noalign{\medskip}13
&-11&-9&7&5&-3&-1\\\noalign{\medskip}-78&-54&34&18&-6&2&-6
\\\noalign{\medskip}-286&154&66&-14&10&-14&-6\\\noalign{\medskip}715&
275&-55&25&-29&-5&-15\\\noalign{\medskip}1287&-297&33&-63&-9&-25&-15
\\\noalign{\medskip}-1716&-132&-132&-36&-36&-20&-20\end {array}
 \right) $ \\ \\

$p+q = 15:$ & $ \frac{1}{128} \left( \begin {array}{cccccccc} 1&1&-1&-1&1&1&-1&-1
\\\noalign{\medskip}15&-13&-11&9&7&-5&-3&1\\\noalign{\medskip}-105&-77
&53&33&-17&-5&-3&-7\\\noalign{\medskip}-455&273&143&-57&-7&-15&-17&7
\\\noalign{\medskip}1365&637&-221&-21&-43&-35&3&-21
\\\noalign{\medskip}3003&-1001&-143&-99&-77&-1&-39&21
\\\noalign{\medskip}-5005&-1001&-143&-187&11&-65&25&-35
\\\noalign{\medskip}-6435&429&-429&99&-99&45&-45&35\end {array}
 \right) $  \\ \\

$p+q = 17:$ & $ \frac{1}{256} \left( \begin {array}{ccccccccc} 1&1&-1&-1&1&1&-1&-1&1
\\\noalign{\medskip}17&-15&-13&11&9&-7&-5&3&1\\\noalign{\medskip}-136&
-104&76&52&-32&-16&4&-4&8\\\noalign{\medskip}-680&440&260&-132&-48&0&-
20&20&8\\\noalign{\medskip}2380&1260&-560&-168&-12&-60&40&0&28
\\\noalign{\medskip}6188&-2548&-728&0&-156&84&-16&56&28
\\\noalign{\medskip}-12376&-3640&364&-364&208&-32&100&28&56
\\\noalign{\medskip}-19448&3432&-572&572&0&176&44&84&56
\\\noalign{\medskip}24310&1430&1430&286&286&110&110&70&70\end {array}

 \right) $
 \end{longtable}
 
 \subsection{For type $\R$ subordinate algebras}
 \begin{longtable}{llrr}
 $p+q = 2:$ & $\frac{1}{2} \left( \begin {array}{ccc} 1&1&-1\\\noalign{\medskip}2&0&2
\\\noalign{\medskip}-1&1&1\end {array} \right)$ & 
$p+q = 4:$ & $\frac{1}{4}  \left( \begin {array}{ccccc} 1&1&-1&-1&1\\\noalign{\medskip}4&-2&0&-2
&-4\\\noalign{\medskip}-6&0&-2&0&-6\\\noalign{\medskip}-4&-2&0&-2&4
\\\noalign{\medskip}1&-1&-1&1&1\end {array} \right)$ \\ \\
\end{longtable}

\begin{longtable}{cc}
$p+q = 6:$ & $\frac{1}{8}  \left( \begin {array}{ccccccc} 1&1&-1&-1&1&1&-1\\\noalign{\medskip}6&
-4&-2&0&-2&4&6\\\noalign{\medskip}-15&-5&-1&-3&1&-5&15
\\\noalign{\medskip}-20&0&-4&0&-4&0&-20\\\noalign{\medskip}15&-5&1&-3&
-1&-5&-15\\\noalign{\medskip}6&4&-2&0&-2&-4&6\\\noalign{\medskip}-1&1&
1&-1&-1&1&1\end {array} \right)$ \\ \\

$p+q = 8:$ & $\frac{1}{16} \left( \begin {array}{ccccccccc} 1&1&-1&-1&1&1&-1&-1&1
\\\noalign{\medskip}8&-6&-4&2&0&2&4&-6&-8\\\noalign{\medskip}-28&-14&4
&-2&4&2&4&14&-28\\\noalign{\medskip}-56&14&-4&6&0&6&4&14&56
\\\noalign{\medskip}70&0&10&0&6&0&10&0&70\\\noalign{\medskip}56&14&4&6
&0&6&-4&14&-56\\\noalign{\medskip}-28&14&4&2&4&-2&4&-14&-28
\\\noalign{\medskip}-8&-6&4&2&0&2&-4&-6&8\\\noalign{\medskip}1&-1&-1&1
&1&-1&-1&1&1\end {array} \right)$

 \end{longtable} 

\bibliographystyle{utphys}
\bibliography{refs}

\end{document}